\newcommand{\op}{\ensuremath{\textrm{\rm op}}}
\newcommand{\cat}[1]{\ensuremath{\mathbf{#1}}\xspace}
\newcommand{\FHilb}{\cat{FHilb}}
\newcommand{\Rel}{\cat{Rel}}
\newcommand{\Stoch}{\cat{Stoch}}
\newcommand{\CPs}{\ensuremath{\mathbf{CP}^*}\xspace}
\newcommand{\CPM}{\ensuremath{\mathbf{CPM}}\xspace}
\newcommand{\V}{\cat{V}}
\newcommand{\B}{\ensuremath{\mathcal{B}}}
\newcommand{\C}{\ensuremath{\mathbb{C}}}
\newcommand{\M}{\ensuremath{\mathbb{M}}}
\newcommand{\Mat}{\ensuremath{\cat{Mat}}}
\newcommand{\Rpos}{\ensuremath{\mathbb{R}_{\geq 0}}}
\newcommand{\MatR}{\ensuremath{\cat{Mat}(\Rpos)}}
\newcommand{\inprod}[2]{\ensuremath{\langle #1\,|\,#2 \rangle}}
\newcommand{\id}[1][]{\ensuremath{1_{#1}}}
\newcommand{\bra}[1]{\ensuremath{\langle #1 |}}
\newcommand{\ket}[1]{\ensuremath{| #1 \rangle}}
\newcommand{\ie}{\textit{i.e.}\xspace}
\newcommand{\eg}{\textit{e.g.}\xspace}
\DeclareMathOperator{\Tr}{Tr}
\DeclareMathOperator{\Mor}{Mor}
\DeclareMathOperator{\dom}{dom}
\theoremstyle{plain}
\newtheorem{theorem}{Theorem}[section]
\newtheorem{corollary}[theorem]{Corollary}
\newtheorem{lemma}[theorem]{Lemma}
\newtheorem{proposition}[theorem]{Proposition}
\theoremstyle{definition}
\newtheorem{definition}[theorem]{Definition}
\newtheorem{example}[theorem]{Example}
\newtheorem{remark}[theorem]{Remark}
\begin{document}
\title{Categories of Quantum and Classical Channels}
\author{Bob Coecke \and Chris Heunen \and Aleks Kissinger}
\address{Department of Computer Science, University of Oxford}
\email{\{coecke,heunen,alek\}@cs.ox.ac.uk}
\thanks{This research was supported by the Engineering and Physical Sciences Research Council Fellowship EP/L002388/1, and the John Templeton Foundation.}
\keywords{Abstract C*-algebras, categorical quantum mechanics,
completely positive maps, quantum channel} 
\subjclass{81P45, 16B50, 18D35, 46L89, 46N50, 81P16}
\date{\today}
\maketitle
\begin{abstract}
  We introduce
  a construction that turns a category of pure state spaces and operators
  into a category of observable algebras and superoperators. 
  For example, it turns the category of finite-dimensional Hilbert
  spaces into the category of finite-dimensional C*-algebras and
  completely positive maps.
  In particular, the new category contains both quantum and classical 
  channels, providing elegant abstract notions of preparation and 
  measurement. We also consider
  nonstandard models, that can be used to investigate which notions
  from algebraic quantum information theory are operationally justifiable.
\end{abstract}

\section{Introduction} 

Algebraic quantum information theory provides a very neat framework in
which to study protocols and algorithms involving both classical and
quantum systems. Instead of stacking structure on top of the base
formalism of Hilbert space -- to accommodate, for example, mixed
states, their measurement and evolution, and classical outcomes --
these basic notions are equal and first-class citizens in the
algebraic approach.

The basic setup is that individual systems are modeled by
C*-algebras, which can be grouped by tensor products, and can evolve
along completely positive maps, also called channels. Classical
systems correspond to commutative algebras. This uniformises many
notions. For example, a density matrix corresponds simply to a channel
from the trivial classical system $\mathbb C$ to a quantum system, and a positive operator valued
measurement is just a channel from a quantum system to a classical
one. One ends up with a category of classical and quantum systems and
channels between them. Advanced protocols can then be modeled by
combining channels in sequence as well as in parallel. For more
information we refer
to~\cite{keyl:quantuminformation,keylwerner:lectures}.  

This paper abstracts that idea away from Hilbert spaces, in an attempt
to obtain a more operational formalism. The Hilbert space formalism is
blessed with such an excess of structure, that many conceptually
different notions coincide in this
model~\cite{redei:instead}. Instead, we will take only the very
basic notion of compositionality as primitive. 

To be precise, we will be working within the programme of categorical
quantum mechanics~\cite{abramskycoecke:cqm,coecke:structures}.  
This programme starts with so-called dagger compact categories,
that assume merely a way of grouping systems together that allows for
entanglement, and a way of composing operations on those systems. A
surprising amount of theory already follows from these primitives,
including scalars, the Born rule, quantum teleportation, and much more.

The categories initially studied mostly accommodated pure
states. However, there is a beautiful construction that works on
arbitrary dagger compact categories, and turns the category containing
pure quantum states and operations
into the category of mixed states and
completely positive maps~\cite{selinger:completelypositive,coeckeheunen:cp}.
The resulting categories can even be axiomatised~\cite{coecke:selinger,coeckeperdrix:channels,coeckeheunen:cp}.
Thus mixed states, and channels between quantum systems, can be studied
without leaving the theory of dagger compact categories. 

Another line of research within categorical quantum mechanics concerns
incorporating classical systems. These can be modeled in terms of the
tensor structure alone, by promoting the no-cloning theorem into an
axiom: the ability to copy and delete becomes an extra feature of
classical systems over quantum ones. This leads to so-called
commutative Frobenius algebras within a
category~\cite{coeckepavlovic:classicalobjects,coeckepavlovicvicary:bases,coeckepaquettepavlovic:structuralism,abramskyheunen:hstar}. Again,
it is pleasantly surprising how much follows: for example, this formalism
encompasses complementary observables, and measurement-based quantum
computing~\cite{coeckeduncan:complementary}. 

This paper combines these two developments in representing quantum channels and
classical systems, respectively. We show that (possibly noncommutative) Frobenius
algebras in the category of Hilbert spaces correspond to
finite-dimensional C*-algebras precisely when they are
normalisable (see also~\cite{vicary:quantumalgebras}). This justifies
regarding such algebras in arbitrary categories as 
\emph{abstract C*-algebras}\footnote{By an \emph{abstract} C*-algebra 
  we mean an object in a monoidal category satisfying certain
  requirements. By a \emph{concrete} one we mean an object satisfying
  those requirements in the category of (finite-dimensional) Hilbert
  spaces. This is not to
  be confused with terminology from functional analysis. There, a
  concrete C*-algebra is a *-subalgebra of the algebra $\B(H)$ of bounded
  operators on a Hilbert space $H$ that is uniformly closed, whereas
  an abstract C*-algebra is any Banach algebra with an involution
  satisfying $\|a^*a\|=\|a\|^2$; these notions are equivalent by the
  Gelfand-Naimark-Segal construction; see \eg~\cite[Theorem~I.9.12]{davidson:cstar}. 
}.
Then, we present a construction that turns a category (of pure states
spaces) into one of channels, in such a way that  the category of
Hilbert spaces becomes the category of finite-dimensional C*-algebras
and channels. We study the cases of ``completely quantum'' and
``completely classical'' abstract C*-algebras, showing that this
so-called CP*--construction neatly combines quantum channels and
classical systems.

Finally, we exemplify our constructions in nonstandard models. This
provides counterexamples that separate conceptually different notions,
even some that are commonly held to coincide. Our results thus form the
starting point for an investigation of the foundations of quantum
mechanics from an operational point of view. For example, one can
show that
commutativity of an algebra of observables need not imply
distributivity of its accompanying quantum logic~\cite{coeckeheunenkissinger:compositional}. The nonstandard
model of sets and relations is a satisfying example of our abstract
theory, which there becomes a theory about the well-studied notion of
a groupoid (see also~\cite{heunencontrerascattaneo:groupoids}). This
opens possibilities to employ ``quantum reasoning'' to obtain group
theoretic results, and vice versa.

Before giving a brief introduction to dagger compact categories, we
end this introduction by reviewing related work. There have been
earlier attempts to combine classical systems with quantum
channels~\cite{selinger:daggeridempotents}. One attempt introduces
biproducts to model classical information. This has the drawback that
classical and quantum information no longer stand on equal footing,
and that adding more primitives than merely compositionality requires
operational justification. Another attempt relies on splitting
idempotents. This is a clean categorical construction that does not
need external ingredients, but it is not so clear that this does not
capture too much. Our CP*--construction mediates between these two
earlier attempts, as made precise in~\cite{heunenkissingerselinger:cpproj}: it needs no external structure,
and it captures the right amount of objects.

A separate development adds classical data to a quantum category via a categorical construction involving
the commutative Frobenius algebras in the category~\cite{coeckepaquettepavlovic:structuralism}. The notion of ``classical morphism''
from that work inspired the formulation of the CP*--construction, by generalising from commutative algebras to non-commutative.

Finally, categorical quantum mechanics links to topological quantum computing~\cite{panangadenpaquette:anyons}. The dagger compact categories of the former form a more basic setting than the modular tensor categories of the latter. Specifically, this article deals with symmetric monoidal categories rather than the more general braided ones. Nevertheless, as the diagrammatic notation of dagger compact categories exemplifies~\cite{selinger:graphicallanguages}, even before going into topological quantum computing, there already is topology in quantum computing. Furthermore, the CP*--construction goes through in a braided setting, for details we refer to the forthcoming book~\cite{heunenvicary:cqm}. This article will avoid those complications and stick to the symmetric setting.

\subsection{Dagger compact categories and graphical language}

It is often useful to reason in a very general sense about processes
and how they compose. Category theory provides the tool to do this. A
category consists of a collection of objects $A, B, C, \ldots$, a
collection morphisms $f, g, \cdots$, an associative operation $\circ$
for (vertical) composition, and for every object $A$ an identity
morphism $\id[A]$. Objects can be thought of as types. They dictate which
morphisms can be composed together. We shall primarily be interested
in categories that have not only a vertical composition operation, but
a horizontal composition as well. 

\begin{definition}\label{def:monoidal-category}
  A \emph{monoidal category} consists of a category $\mathcal V$, an
  object $I \in \mathcal V$ called the monoidal unit, a bifunctor
  $\otimes : \mathcal V \times \mathcal V \rightarrow \mathcal V$
  called the monoidal product, and natural isomorphisms
  $\alpha_{A,B,C} : A \otimes (B \otimes C) \rightarrow (A \otimes B)
  \otimes C$, $\lambda_A : I \otimes A \rightarrow A$, and $\rho_A : A
  \otimes I \rightarrow A$, such that $\lambda_I = \rho_I$ and the
  following diagrams commute: 
  \[
    \begin{tikzpicture}
      \matrix (m) [cdiag,column sep=2em] {
        A \otimes (B \otimes (C \otimes D)) &
        (A \otimes B) \otimes (C \otimes D) &
        ((A \otimes B) \otimes C) \otimes D \\
        A \otimes ((B \otimes C) \otimes D) & &
        (A \otimes (B \otimes C)) \otimes D \\
      };
      \path [arrs]
        (m-1-1) edge node {$\alpha$} (m-1-2)
        (m-1-2) edge node {$\alpha$} (m-1-3)
        (m-1-1) edge node {$A \otimes \alpha$} (m-2-1)
        (m-2-1) edge node {$\alpha$} (m-2-3)
        (m-2-3) edge node {$\alpha \otimes D$} (m-1-3);
    \end{tikzpicture}
  \]
  
  \[
    \begin{tikzpicture}
      \matrix (m) [cdiag] {
        A \otimes (I \otimes B) & & (A \otimes I) \otimes B \\
        & A \otimes B & \\
      };
      \path [arrs]
        (m-1-1) edge node {$\alpha$} (m-1-3)
        (m-1-1) edge node [swap] {$A \otimes \lambda$} (m-2-2)
        (m-1-3) edge node {$\rho \otimes B$} (m-2-2);
    \end{tikzpicture}
  \]
\end{definition}

Our main example is the category $\FHilb$, whose objects are
finite-dimensional complex Hilbert spaces, and whose morphisms are linear
functions. It becomes a monoidal category under the usual tensor
product of Hilbert spaces, with unit object $\mathbb{C}$. 

We often drop $\alpha$, $\lambda$, and $\rho$ when they are clear from
the context. Monoidal categories where all three of these maps are
actually equalities, rather than natural isomorphisms,
are called \emph{strict}.  In any monoidal category, they can be used
to construct a natural isomorphism from some object to any other
bracketing of that object, with or without monoidal units. For
example: 
\[ (A \otimes I) \otimes (B \otimes (I \otimes C)) \cong (A \otimes (B \otimes (C \otimes I))). \]
Mac Lane's \emph{coherence theorem} proves that the equations in
Definition \ref{def:monoidal-category} suffice to show that any such
natural isomorphism is equal to any other
one~\cite{maclane:categories}. This lets us treat monoidal categories
as if they were strict. That is, we may omit brackets, $\alpha$,
$\lambda$, and $\rho$ without ambiguity, simply assuming they are
included where necessary. 

Instead of the usual algebraic notation for morphisms in monoidal
categories, it is often vastly more convenient to use a graphical
notation (see also~\cite{selinger:graphicallanguages}). Morphisms can
be thought of as processes. A morphism takes 
something of type $A$ and produces something of type $B$. We draw
morphisms as:
\ctikzfig{cat_boxes}
Identity morphisms are special ``do nothing'' processes, which take
something of type $A$ and return the thing itself. We represent
objects, and the identity morphisms on them, as empty wires:
\ctikzfig{cat_wires}
Morphisms are composed by connecting an output wire into an input wire:
\ctikzfig{cat_compose}
This notation neatly incorporates the assumption that composition
is associative, and that composition with an identity has no effect.

The monoidal product of two 
morphisms is expressed as juxtaposition:
\ctikzfig{cat_tensor_boxes}
The monoidal product is also associative and unital, but possibly only
up to isomorphism. The (identity on) the monoidal unit object $I$ is
denoted by the empty picture.

  
  







\begin{definition}\label{def:braided-monoidal-category}
  A \emph{symmetric monoidal category} is a monoidal category
  with an additional natural isomorphism $\sigma_{A,B} : A \otimes B
  \rightarrow B \otimes A$, such that $\sigma_{A,B}^{-1} =
  \sigma_{B,A}$, $\rho_A = \lambda_A \circ \sigma_{A, I}$, and the
  following ``hexagon'' diagram commutes:
  \[
    \begin{tikzpicture}
      \matrix (m) [cdiag,scale=0.4, column sep=2em] {
        & (B \otimes A) \otimes C & B \otimes (A \otimes C) & \\
        (A \otimes B) \otimes C & & & B \otimes (C \otimes A) \\
        & A \otimes (B \otimes C) & (B \otimes C) \otimes A & \\
      };
      \path [arrs]
        (m-2-1) edge node {$\sigma \otimes C$} (m-1-2)
        (m-1-2) edge node {$\alpha$} (m-1-3)
        (m-1-3) edge node {$B \otimes \sigma$} (m-2-4)
        (m-2-1) edge node {$\alpha$} (m-3-2)
        (m-3-2) edge node {$\sigma$} (m-3-3)
        (m-3-3) edge node {$\alpha$} (m-2-4);
    \end{tikzpicture}
  \]
\end{definition}
  %
We draw symmetry maps as wire crossings:
\ctikzfig{cat_sym_crossing}
This graphical notation unambiguously represents morphisms in 
symmetric monoidal
categories~\cite{joyalstreet:braidedtensorcategories}. Moreover, this
representation is sound and complete with respect to the algebraic
definition of a symmetric monoidal category. 
Our example monoidal category $\FHilb$ becomes symmetric by letting
$\sigma_{H,K}(h \otimes k) := k \otimes h$, for all $H$ and $K$.

\begin{definition}\label{def:dual}
  A \emph{compact category} is a symmetric monoidal category in which
  every object $A$ comes with a \emph{dual} object $A^*$ and
  morphisms 
  $\eta_A \colon I \rightarrow A^* \otimes A$ and 
  $\varepsilon_A \colon A \otimes A^* \to I$ satisfying:
  \begin{center}
    \begin{tikzpicture}
      \matrix (m) [cdiag] {
        A & \\
        A \otimes A^* \otimes A & A \\
      };
      \path [arrs]
        (m-1-1) edge node {$1_A$} (m-2-2)
        (m-1-1) edge node [swap] {$A \otimes \eta_A$} (m-2-1)
        (m-2-1) edge node [swap] {$\varepsilon_A \otimes A$} (m-2-2);
    \end{tikzpicture}
    \quad
    \begin{tikzpicture}
      \matrix (m) [cdiag] {
        A^* & A^* \otimes A \otimes A^* \\
            & A^* \\
      };
      \path [arrs]
        (m-1-1) edge node [swap] {$1_A$} (m-2-2)
        (m-1-1) edge node {$\eta_A \otimes A^*$} (m-1-2)
        (m-1-2) edge node {$A^* \otimes \varepsilon$} (m-2-2);
    \end{tikzpicture}
  \end{center}
\end{definition}

In the graphical notation, the object $A^*$ is represented as a wire labelled $A$, but directed downward instead of upward:
\ctikzfig{cat_dual_wire}
We represent $\eta_A$ as a cup, and $\varepsilon_A$ as a cap:
\ctikzfig{cat_cap_cup}
The diagrams from the previous definition are called the ``snake
equations'' because of their graphical representations:
\ctikzfig{cat_line_yank}




In a compact category, any map $f \colon A \to B$ can also be considered as a map $f^* \colon A^* \rightarrow B^*$ by using caps and cups to ``bend the wires'' around:
\ctikzfig{cat_bend_wires}

Our example category $\FHilb$ is compact closed. For a
finite-dimensional Hilbert space $H$, let $H^*$ be the dual Hilbert
space. Any orthonormal basis $e_i$ for $H$ then induces a basis
$\overline{e_i}$ for $H^*$. Define $\varepsilon_H(e_i \otimes
\overline{e_j}) = \delta_{ij}$ and $\eta_H(1) = \sum_i \overline{e_i}
\otimes e_j.$ These maps satisfy the snake equations and do not depend
on the choice of basis $e_i$.
Computing $f^* : B^* \to A^*$ in terms of $\varepsilon$ and $\eta$ yields
the (operator) transpose of $f$, \ie~$f^*(\xi) = \xi \circ f$. This is not to be confused
with the matrix transpose, which is basis-dependent (as it depends on fixing
a \textit{particular} isomorphism $A^* \cong A$).

Finally, we abstract the notion of conjugate-transpose.

\begin{definition}
  A \emph{dagger} on a category $\V$ is a contravariant functor
  $\dag \colon \V^\op \to \V$ satisfying $A^\dag = A$ for all objects
  $A$, and $f^{\dag\dag} = f$ for all morphisms $f$. A \emph{unitary} in
  a dagger category is a map $u\colon A \to
  B$ with $u \circ u^\dag = \id[B]$ and $u^\dag \circ u = \id[A]$.
\end{definition}
In particular, $\dagger$-categories are always isomorphic with their opposite category. As the notation suggests, the $\dagger$ functor is an abstract version of the conjugate-transpose of a complex linear map. Thus, for linear maps, the abstract
notion of unitary is precisely the usual one.
\begin{definition}
  A \emph{dagger compact category} is a compact category that comes
  with a dagger such that $(f \otimes g)^\dagger = f^\dagger \otimes g^\dagger$,
  the structure maps $\alpha_A$, $\lambda_A$, and $\sigma_{A,B}$ are all unitary, and $\varepsilon_A^\dagger = \eta_{A^*}$.
\end{definition}
The role of conjugation in a dagger compact category is played by the \textit{lower-star} operation: $f_* : A^* \to B^*$, which is defined as:
\[ f_* := (f^\dagger)^* = (f^*)^\dagger \]


Our example category $\FHilb$ is dagger compact via the formula for
adjoints: $\inprod{f(h)}{k} = \inprod{h}{f^\dag(k)}$.

Finally, we will need the following notion of structure-preserving
functor between dagger compact categories.

\begin{definition}\label{def:monoidal-functor}
  A functor $F \colon \cat{C} \to \cat{D}$ between dagger symmetric monoidal categories
  is a \emph{dagger symmetric monoidal functor} when $F \circ \dagger = \dagger \circ F$ and it comes with an isomorphism $\psi
  \colon I \rightarrow F(I)$ and a natural isomorphism $\varphi_{A,B}
  \colon FA \otimes FB \rightarrow F(A \otimes B)$ making the
  following diagrams commute:  
  \begin{center}
    \begin{tikzpicture}
      \matrix (m) [cdiag] {
        (FA \otimes FB) \otimes FC & F(A \otimes B) \otimes FC & F((A \otimes B) \otimes C) \\
        FA \otimes (FB \otimes FC) & FA \otimes F(B \otimes C) & F(A \otimes (B \otimes C)) \\
      };
      \path [arrs]
        (m-1-1) edge node {$\varphi \otimes FC$} (m-1-2)
        (m-1-2) edge node {$\varphi$} (m-1-3)
        (m-2-1) edge node {$FA \otimes \varphi$} (m-2-2)
        (m-2-2) edge node {$\varphi$} (m-2-3)
        (m-1-1) edge node [swap] {$\alpha$} (m-2-1)
        (m-1-3) edge node {$F(\alpha)$} (m-2-3);
    \end{tikzpicture}
    
    \begin{tikzpicture}
      \matrix (m) [cdiag] {
        FA \otimes I & FA \\
        FA \otimes FI & F(A \otimes I) \\
      };
      \path [arrs]
        (m-1-1) edge node {$\rho$} (m-1-2)
        (m-2-1) edge node {$\varphi$} (m-2-2)
        (m-1-1) edge node [swap] {$FA \otimes \psi$} (m-2-1)
        (m-1-2) edge node {$F(\rho^{-1})$} (m-2-2);
    \end{tikzpicture}
    \qquad
    \begin{tikzpicture}
      \matrix (m) [cdiag] {
        I \otimes FA & FA \\
        FI \otimes FA & F(I \otimes A) \\
      };
      \path [arrs]
        (m-1-1) edge node {$\lambda$} (m-1-2)
        (m-2-1) edge node {$\varphi$} (m-2-2)
        (m-1-1) edge node [swap] {$\psi \otimes FA$} (m-2-1)
        (m-1-2) edge node {$F(\lambda^{-1})$} (m-2-2);
    \end{tikzpicture}

    \begin{tikzpicture}
      \matrix (m) [cdiag] {
        FA \otimes FB & FB \otimes FA \\
        F(A \otimes B) & F(B \otimes A) \\
      };
      \path [arrs]
        (m-1-1) edge node {$\sigma_{FA,FB}$} (m-1-2)
        (m-2-1) edge node {$F\sigma_{A,B}$} (m-2-2)
        (m-1-1) edge node {$\varphi_{A,B}$} (m-2-1)
        (m-1-2) edge node {$\varphi_{B,A}$} (m-2-2);
    \end{tikzpicture}
  \end{center}
\end{definition}

Preserving the dagger and the monoidal structure suffices to preserve the
compact structure~\cite{duncan:thesis}.




\section{Abstract C*-algebras}\label{sec:abstractcstaralgebras}

This section defines so-called normalisable dagger Frobenius
algebras. The running example investigates these structures
in the category of finite-dimensional Hilbert spaces. As will turn
out, they are precisely finite-dimensional C*-algebras. Therefore,
we will think of normalisable dagger Frobenius algebras in arbitrary
dagger compact categories as \emph{abstract C*-algebras}.

\begin{definition}\label{def:frobenius}
  A \emph{dagger Frobenius algebra} is an object $A$ in a dagger
  monoidal category together with morphisms $\whitemult \colon A
  \otimes A \to A$ and $\whiteunit \colon I \to A$, called
  multiplication and unit, satisfying the following diagrammatic
  equations: 
  \ctikzfig{fa_axioms}
  These identities are called associativity, unitality, and the Frobenius law.
  The maps $\whitecomult$ (comultiplication) and 
  and $\whitecounit$ (counit) are defined as $\left(\whitemult\right)^\dagger$ and
  $\left(\whiteunit\right)^\dagger$, respectively.
  They automatically satisfy coassociativity and counitality, which
  are the upside-down versions of associativity and unitality.
\end{definition}

\begin{example}\label{ex:matrixalgebra}
  An important example is the set $A=\M_n$ of $n$-by-$n$ matrices
  with complex entries. This set
  is clearly an algebra: defining $\whitemult$ as $(a,b) \mapsto ab$
  and $\whiteunit \colon \mathbb{C} \to A$ by $1 \mapsto 1_A$
  satisfies associativity and unitality.  The algebra $A$ becomes a
  Hilbert space under the Hilbert--Schmidt inner product
  $\inprod{a}{b} = \Tr(a^\dag b)$. It has a canonical orthonormal
  basis $\{e_{ij} \mid i,j=1,\ldots,n\}$, where $e_{ij}$ is the matrix
  all of whose entries vanish except for a one at location $(i,j)$. We 
  can now compute 
  \[
  \whitecounit(e_{ij}) = \inprod{\whitecounit(e_{ij})}{1} =
  \inprod{e_{ij}}{\whiteunit(1)} = \inprod{e_{ij}}{1_A} = \Tr(e_{ji})
  = \delta_{ij},
  \]
  so that $\whitecounit \colon a \mapsto \Tr(a)$ by linearity.
  Similarly,
  \[
  \inprod{\whitecomult(e_{ij})}{e_{kl} \otimes e_{pq}} =
  \inprod{e_{ij}}{\whitemult(e_{kl} \otimes e_{pq})} =
  \inprod{e_{ij}}{\delta_{lp}e_{kq}} =
  \delta_{ik}\delta_{jq}\delta_{lp},
  \]
  whence $\whitecomult(e_{ij}) = \sum_l e_{il} \otimes e_{lj}$. With
  these explicit expressions it is easy to see
  \begin{align*}
    \whitecomult \circ \whitemult (e_{ij} \otimes e_{kl})
    & = \whitecomult (\delta_{jk} e_{il})
      = \delta_{jk} \sum_p e_{ip} \otimes e_{pl} \\
    & = \sum_p (\whitemult \otimes \whiteline) (e_{ij} \otimes e_{kp} \otimes e_{pl}) \\
    & = (\whitemult \otimes \whiteline) (\sum_p e_{ij} \otimes e_{kp} \otimes e_{pl}) \\
    & = (\whitemult \otimes \whiteline) \circ (\whiteline \otimes \whitecomult)
    (e_{ij} \otimes e_{kl}).
  \end{align*}
  Similarly, 
  \[
      \whitecomult \circ \whitemult (e_{ij} \otimes e_{kl})
      = (\whiteline \otimes \whitemult) \circ (\whitecomult \otimes \whiteline)
    (e_{ij} \otimes e_{kl}).
  \]
  Linearity now shows that $(A,\whitemult,\whitecounit)$ is a dagger
  Frobenius algebra in $\FHilb$. 
\end{example}

Any dagger Frobenius algebra defines a cap and a cup satisfying the snake
identities. 
\begin{equation}\label{eq:frob-snake-ids}
\beginpgfgraphicnamed{cap_cup}
\begin{tikzpicture}[dotpic]
	\begin{pgfonlayer}{nodelayer}
		\node [style=white dot] (0) at (-5.25, -0.25) {};
		\node [style=white dot] (1) at (-2.75, -0) {};
		\node [style=white dot] (2) at (-2.75, -0.75) {};
		\node [style=none] (3) at (-5.75, 0.5) {};
		\node [style=none] (4) at (-4.75, 0.5) {};
		\node [style=none] (5) at (-3.25, 0.75) {};
		\node [style=none] (6) at (-2.25, 0.75) {};
		\node [style=none] (7) at (-4, -0) {$:=$};
		\node [style=none] (8) at (3.25, -0.75) {};
		\node [style=white dot] (9) at (2.75, 0) {};
		\node [style=none] (10) at (2.25, -0.75) {};
		\node [style=none] (11) at (1.5, 0) {$:=$};
		\node [style=white dot] (12) at (0.25, 0.25) {};
		\node [style=white dot] (13) at (2.75, 0.75) {};
		\node [style=none] (14) at (0.75, -0.5) {};
		\node [style=none] (15) at (-0.25, -0.5) {};
		\node [style=none] (16) at (5.5, -0) {};
		\node [style=white dot] (17) at (6, -0.75) {};
		\node [style=none] (18) at (7.5, -0) {};
		\node [style=white dot] (19) at (7, 0.75) {};
		\node [style=none] (20) at (6.5, -0) {};
		\node [style=none] (21) at (5.5, 0.75) {};
		\node [style=none] (22) at (7.5, -0.75) {};
		\node [style=none] (23) at (9, 0.75) {};
		\node [style=none] (24) at (9, -0.75) {};
		\node [style=none] (25) at (8.25, -0) {$=$};
		\node [style=white dot] (26) at (12, -0.75) {};
		\node [style=none] (27) at (11.5, -0) {};
		\node [style=white dot] (28) at (11, 0.75) {};
		\node [style=none] (29) at (12.5, -0) {};
		\node [style=none] (30) at (10.5, -0) {};
		\node [style=none] (31) at (10.5, -0.75) {};
		\node [style=none] (32) at (12.5, 0.75) {};
		\node [style=none] (33) at (9.75, -0) {$=$};
	\end{pgfonlayer}
	\begin{pgfonlayer}{edgelayer}
		\draw [style=diredge, in=-90, out=165, looseness=1.25] (0) to (3.center);
		\draw [style=diredge, in=-90, out=15, looseness=1.25] (0) to (4.center);
		\draw [style=diredge] (2) to (1);
		\draw [style=diredge, bend left=15] (1) to (5.center);
		\draw [style=diredge, bend right=15] (1) to (6.center);
		\draw [style=diredge, in=-165, out=90, looseness=1.25] (15.center) to (12);
		\draw [style=diredge, in=-15, out=90, looseness=1.25] (14.center) to (12);
		\draw [style=diredge] (9) to (13);
		\draw [style=diredge, bend left=15] (10.center) to (9);
		\draw [style=diredge, bend right=15] (8.center) to (9);
		\draw [in=-90, out=165, looseness=1.25] (17) to (16.center);
		\draw [style=diredge, in=-165, out=90, looseness=1.25] (20.center) to (19);
		\draw [style=diredge, in=-15, out=90, looseness=1.25] (18.center) to (19);
		\draw [style=diredge] (16.center) to (21.center);
		\draw (22.center) to (18.center);
		\draw [style=diredge] (24.center) to (23.center);
		\draw [in=-90, out=15, looseness=1.25] (17) to (20.center);
		\draw [in=-90, out=15, looseness=1.25] (26) to (29.center);
		\draw [style=diredge, in=-15, out=90, looseness=1.25] (27.center) to (28);
		\draw [style=diredge, in=-165, out=90, looseness=1.25] (30.center) to (28);
		\draw [style=diredge] (29.center) to (32.center);
		\draw (31.center) to (30.center);
		\draw [in=-90, out=165, looseness=1.25] (26) to (27.center);
	\end{pgfonlayer}
\end{tikzpicture}}
\endpgfgraphicnamed
\end{equation}
This cup and cap provide an alternative form of the Frobenius law that
is sometimes more convenient:
\ctikzfig{frobenius_cup_form}

\begin{definition}\label{def:symmetric}
  A dagger Frobenius algebra $(A,\whitemult,\whiteunit)$ is
  \emph{symmetric} when it satisfies the following equation:
  \ctikzfig{symmetric}
\end{definition}

The dagger Frobenius algebra $\M_n$ in $\FHilb$ is symmetric by the
cyclic property of the trace: $\Tr(ba)=\Tr(ab)$.

\begin{proposition}\label{prop:symmetrictrace}
  For any symmetric Frobenius algebra:
  \ctikzfig{left_right_trace}
\end{proposition}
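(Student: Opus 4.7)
The plan is to translate the two loop diagrams of \texttt{left\_right\_trace} into algebraic expressions using the cap and cup defined in equation \eqref{eq:frob-snake-ids}, and then to close the gap between them by a short chain of rewrites driven by the symmetry axiom and naturality of the swap. Write $\varepsilon := \whitecounit \circ \whitemult \colon A\otimes A \to I$ and $\eta := \whitecomult \circ \whiteunit \colon I\to A\otimes A$; by equation~\eqref{eq:frob-snake-ids} these satisfy the snake equations, and under $\dagger$ they are interchanged. Unwrapping the pictures in \texttt{left\_right\_trace}, the two sides correspond, for an arbitrary $f\colon A\to A$, to $\varepsilon\circ (f\otimes \id_A)\circ \eta$ and $\varepsilon\circ (\id_A\otimes f)\circ \eta$.

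The key observation is that Definition~\ref{def:symmetric} is exactly the statement that the induced cap is invariant under swap, i.e.\ $\varepsilon \circ \sigma_{A,A} = \varepsilon$; applying $\dagger$ (using that $\sigma$ is unitary) gives the dual form $\sigma_{A,A}\circ \eta = \eta$. This is confirmed by the running example: in $\M_n$ the cap is $(a,b)\mapsto \Tr(ab)$, and symmetry is the cyclic property of the trace.

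With that in hand, the derivation is a three-line diagram chase:
\[
  \varepsilon\circ (f\otimes \id_A)\circ \eta
  \;=\; \varepsilon \circ \sigma_{A,A} \circ (f\otimes \id_A) \circ \eta
  \;=\; \varepsilon \circ (\id_A \otimes f) \circ \sigma_{A,A} \circ \eta
  \;=\; \varepsilon \circ (\id_A \otimes f) \circ \eta,
\]
where the first and third equalities use the symmetric axiom (in its cap and cup forms, respectively), and the middle equality uses naturality of $\sigma$. Graphically, this amounts to using the symmetric condition to twist one end of the loop, sliding $f$ around the resulting crossing, and untwisting on the other side.

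The main obstacle is purely notational: one needs to recognise that the pictures in \texttt{symmetric} and \texttt{left\_right\_trace} can be expressed in terms of the single cap $\varepsilon$ and cup $\eta$, so that the axiom reads as $\varepsilon\circ\sigma=\varepsilon$ rather than as a statement about $\whitemult$ and $\whitecounit$ separately. Once this translation is made, no further Frobenius gymnastics (associativity, the Frobenius law, etc.) are needed.
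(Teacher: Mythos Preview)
Your proof is correct and is essentially the paper's argument rendered algebraically: the paper's one-line graphical proof also works by using symmetry of the Frobenius cap (and, by dagger, the cup) to pass a crossing through each end of the loop, with the middle step being naturality of $\sigma$. Your explicit observation that the symmetry axiom reads as $\varepsilon\circ\sigma=\varepsilon$ (whence $\sigma\circ\eta=\eta$ via the dagger) is exactly the content of the paper's remark that ``symmetry can be used to interchange traces with Frobenius caps and cups.''
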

\begin{proof}
  Symmetry can be used to interchange traces with Frobenius caps and cups.
  \ctikzfig{left_right_trace_pf}
\end{proof}

A dagger Frobenius algebra is certainly symmetric when it is
\emph{commutative}, \ie when it satisfies the following equation:
\ctikzfig{commutative}
Being commutative is strictly stronger than being symmetric. For
example, in $\FHilb$, the algebra $\M_n$ is commutative precisely when
$n=1$. 
Nevertheless, there are plenty of commutative dagger
Frobenius algebras in $\FHilb$. For example, consider the subalgebra
$A$ of $\M_n$ consisting of matrices that are diagonal in some fixed
orthogonal basis. It turns out that this is the only example:
commutative dagger Frobenius algebras $(A,\whitemult,\whiteunit)$ in
$\FHilb$ are in one-to-one correspondence with orthogonal bases of
$A$; see~\cite{coeckepavlovicvicary:bases}. Orthonormal bases
correspond to so-called \emph{normal} algebras.
\footnote{There is a closely related notion called \emph{specialness}. A dagger Frobenius algebra is normal if and only if it is special and symmetric. In $\cat{FHilb}$, normal and special coincide for dagger Frobenius algebras.}
This abstract characterisation of orthonormal bases is
what first sparked the interest in Frobenius algebras in categorical
quantum mechanics~\cite{coeckepavlovic:classicalobjects}.  

We will combine symmetric and commutative algebras as follows.
If $A$ and $B$ are dagger Frobenius algebras in $\FHilb$,
then so is their direct sum $A \oplus B$. If $A$ and $B$ are 
symmetric or commutative, then so is $A \oplus B$. However,
not many interesting, non-commutative algebras in $\FHilb$
are normal, so we need to find a condition to take the place
of normality. Investigating matrix algebras $\mathbb M_n$, 
we might think it suffices to consider algebras that are normal
\textit{up to a scaling factor} $1/n$. However, ``scaled normality'', unlike
normality, is not preserved by direct sum. For example, if
$A=\M_m$ and $B=\M_n$ the induced Frobenius algebra on
of $A \oplus B$ is only normal up to a scalar when $n = m$.

For this reason we will consider a more general condition, called \emph{normalisability}.
Before defining this concept, we introduce the notion of a central map.

\begin{definition}\label{def:central}
  A map $z \colon A \rightarrow A$ is \emph{central} for a
  multiplication $\whitemult$ on $A$ when: 
  \ctikzfig{central}
\end{definition}

The terminology derives from the usual notion of centre for \eg a group,
ring, algebra, etc. Left (or right) multiplication $\whitemult \circ (a \otimes -)
\colon A \to A$ with an element $a \colon I \to A$ is a central map
precisely when $a$ is in the centre $Z(A) = \{ a \in A \mid \forall b
\in A \colon ab = ba \}$. Furthermore, all central maps of a Frobenius
algebra arise this way.
  
A map $g \colon A \to A$ in a dagger category is called
\emph{positive} when $g=h^\dag \circ h$ for some $h$. It is
called \emph{positive definite} if it is a positive isomorphism. Using 
these conditions, we can define normalisability as a
well-behavedness property of the ``loop''. 

\begin{definition}\label{def:normalisable}
  A dagger Frobenius algebra $(A, \whitemult, \whiteunit)$ is
  \emph{normalisable} when it comes with a central, positive definite
  $z \colon A \to A$ such that 
  \ctikzfig{normalisable}
  The map $z$ is called the \textit{normaliser}, and we will often depict it
  simply as $\whitenorm$.
  The algebra is \emph{normal} when we may take $z=1$.
\end{definition}

The equation above uniquely fixes the map $z^2$, so normalisers are unique in any
category where positive square roots are unique, when they exist (such as \FHilb).



All normal Frobenius algebras are symmetric. This turns out the be the
case for dagger normalisable Frobenius algebras as well.

\begin{proposition}\label{prop:normalisableimpliessymmetric}
  Normalisable dagger Frobenius algebras are symmetric.
\end{proposition}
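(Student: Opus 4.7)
The plan is to recognise that the symmetric equation of Definition~\ref{def:symmetric} asks exactly that the two Frobenius ``traces'' around the multiplication agree, and to show that both equal $z^2$ via the normalisability equation together with centrality of $z$.

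Concretely, I would first rewrite the symmetric equation, via the Frobenius law in the cup-and-cap form displayed after equation~(\ref{eq:frob-snake-ids}), so that both of its sides become an endomorphism $A \to A$ depicted as a ``loop'' running around a single copy of $\whitemult$. One orientation of this loop is precisely the left-hand side of the normalisability equation from Definition~\ref{def:normalisable}, and therefore equals $z^2$ on the nose. For the other orientation, I would take the dagger of the normalisability equation; self-adjointness of $z$ (which comes for free from positive-definiteness, since $z = h^{\dagger}\circ h$ with $h$ an isomorphism) gives that $z^2$ is self-adjoint, and the reflected loop on the left-hand side is handled by sliding $z$ past $\whitemult$ using centrality. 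This produces the mirror equation, identifying the second loop also with $z^2$.

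Having shown that both loops coincide with $z^2$, the symmetric equation follows immediately by transitivity. An equivalent finish, which avoids the dagger step, is to note the remark after Definition~\ref{def:normalisable} that $z^2$ is uniquely determined by its defining equation, and then check that the ``mirror'' loop satisfies the same equation with the same central positive-definite witness, forcing it to equal $z^2$ as well.

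The main obstacle is the first step: one must convert the symmetric equation, which compares $\whitecounit \circ \whitemult$ with $\whitecounit \circ \whitemult \circ \sigma_{A,A}$, into the equality of two loop diagrams on $A$, and moreover in such a way that one of the loops is literally the loop of Definition~\ref{def:normalisable}. This is a purely graphical manipulation using associativity, the snake identities, and the Frobenius law, but the diagrammatic bookkeeping is where all the content lives. Once that reduction is in place, centrality takes over to transport $z$ through the diagram, and positive-definiteness ensures $z$ is invertible so that no extraneous factors obstruct the final identification.
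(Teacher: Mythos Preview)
There is a genuine gap stemming from a misreading of Definition~\ref{def:normalisable}. The normalisability equation does not assert that some loop endomorphism $A\to A$ equals $z^{2}$; it says that a particular partial trace $\whiteloop\colon A\to I$ of the multiplication, precomposed with $z^{2}$, equals the counit $\whitecounit$ (compare the verification $\Tr_{A}(\whitemult)\circ(\whitenorm)^{2}=\whitecounit$ in the proof of Theorem~\ref{thm:algebrasFHilb}, and the use of $\whiteloop$ in Section~\ref{sec:nonstandard}). So neither side of the symmetric equation, after any bending, can be ``equal to $z^{2}$ on the nose'': the types do not match, and the uniqueness remark after the definition pins down $z^{2}$, not the loop. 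Your dagger step does not rescue this either: applying $\dagger$ to $\whiteloop\circ z^{2}=\whitecounit$ produces an equation whose right-hand side is the \emph{unit} $\whiteunit\colon I\to A$, not a mirrored counit identity, so no ``second loop equals $z^{2}$'' falls out. The alternative finish via uniqueness of $z^{2}$ fails for the same reason: you would need the mirror loop to satisfy the \emph{same} equation with $\whitecounit$, which is exactly the symmetry you are trying to prove.

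The paper's argument proceeds differently. One expands $\whitecounit$ in $\whitecounit\circ\whitemult$ via normalisability, obtaining a diagram with two copies of $\whitemult$, a closed strand, and a $z^{2}$. Centrality lets $z^{2}$ migrate, but the essential step---marked $(*)$ in the proof---is a pure diagram deformation, the cyclicity of the categorical trace, which trades the two multiplication nodes without changing the morphism. After that swap one contracts back along normalisability to obtain $\whitecounit\circ\whitemult\circ\sigma$. Your outline never invokes this cyclicity move; it is the load-bearing idea, and it is independent of any ``both sides equal $z^{2}$'' reduction.
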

\begin{proof}
  Expand the counit.
  \ctikzfig{norm_symmetric_pf}
  Note that the step marked ($*$) is just a diagram deformation: the two
  multiplication maps have traded places. This corresponds to cyclicity of
  the trace.
\end{proof}

The dagger Frobenius algebra $\M_n$ in $\FHilb$ is normalised
by $z(a)=n^{-1/2} a$:
\ctikzfig{normalisable_pf}
The point of normalisability is that the algebra $\M_m \oplus \M_n$ is
also normalisable (but no longer special unless $m=n$), by the central map $z(a,b)
= (m^{-1/2}a, n^{-1/2}b)$. Thus direct sums $\bigoplus_k \M_{n_k}$ of matrix
algebras are normalisable dagger Frobenius algebras in $\FHilb$.
But these are precisely the finite-dimensional
C*-algebras! This is a standard fact, see
e.g.~\cite[Theorem~III.1.1]{davidson:cstar}. Recall that a
finite-dimensional \emph{C*-algebra} is a finite-dimensional algebra
$A$ equipped with an involution satisfying $\|a^*a\|=\|a\|^2$ (for some
norm satisfying $\|ab\| \leq \|a\|\|b\|$ that is then unique).
The following theorem shows that this exhausts all examples of
normalisable dagger Frobenius algebras 
in $\FHilb$. 
Thus we may think of normalisable dagger Frobenius
algebras in arbitrary categories as \emph{abstract C*-algebras} (see also~\cite{zakrzewski:pseudogroups}).

We can show this directly by defining the C*-algebra structure in terms of the Frobenius algebra
structure. First note that any Frobenius algebra fixes an isomorphism $A^* \cong A$ as follows:
\ctikzfig{dualiser}
These two maps are inverse because of snake identies from equation~\eqref{eq:frob-snake-ids}.

\begin{theorem}\label{thm:algebrasFHilb}
  If $(A,\whitemult,\whiteunit,\whitenorm)$ is a normalisable dagger Frobenius
  algebra in $\FHilb$, then the following involution
  gives it the structure of a finite-dimensional C*-algebra:
  \ctikzfig{star_operation}
  Conversely, up to isomorphism, all finite-dimensional C*-algebras arise in this way.
\end{theorem}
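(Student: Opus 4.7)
The plan splits naturally into the two implications. For the forward direction, I would verify diagrammatically that $*$ gives $A$ the structure of a $*$-algebra and then exhibit a C*-norm via a regular-representation argument. For the converse, I would invoke the standard Artin--Wedderburn decomposition of finite-dimensional C*-algebras and apply Example~\ref{ex:matrixalgebra} summand by summand.

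\textbf{Forward direction.} Let $(A,\whitemult,\whiteunit,z)$ be a normalisable dagger Frobenius algebra in $\FHilb$, and let $*$ be defined as in the statement. First I would check that $*$ is an involutive, conjugate-linear anti-homomorphism. Involutivity $a^{**} = a$ follows from the snake identities of equation~\eqref{eq:frob-snake-ids} together with $z^\dagger = z$ (from positivity) and the fact that the two copies of $z$ produced by applying $*$ twice can be combined, using centrality, into $z^2$ which is then cancelled by the normalisation equation. Conjugate-linearity is immediate from conjugate-linearity of $\dagger$ on hom-sets of $\FHilb$. The anti-multiplicativity $(ab)^* = b^*a^*$ is a diagram chase combining the Frobenius law (to slide the multiplication through the Frobenius dualiser) with centrality of $z$ (to move the normaliser past a multiplication). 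This makes $A$ a finite-dimensional $*$-algebra.

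To equip $A$ with a C*-norm, I would consider the left regular representation $L \colon A \to B(A)$, $a \mapsto L_a$ where $L_a(b) := ab$, with $A$ viewed as a Hilbert space via the inner product it already carries as an object of $\FHilb$. The pivotal identity to verify is
\[
\langle L_a b \mid c \rangle \;=\; \langle b \mid L_{a^*} c \rangle \qquad \text{for all } a,b,c \in A,
\]
i.e.\ $L_{a^*}$ is the Hilbert-space adjoint of $L_a$. Expanding both inner products with the Frobenius cap and cup reduces this to a purely diagrammatic identity built from the Frobenius law and the defining equation of the normaliser. Injectivity of $L$ is clear from $L_a(\whiteunit) = a$, so $L$ is a faithful unital $*$-homomorphism into the C*-algebra $B(A)$; pulling back the operator norm equips $A$ with a C*-norm satisfying $\|a^*a\| = \|a\|^2$.

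\textbf{Converse direction.} By the standard structure theorem, every finite-dimensional C*-algebra is isomorphic to $\bigoplus_k \M_{n_k}$. Example~\ref{ex:matrixalgebra} supplies a dagger Frobenius structure on each summand $\M_{n_k}$, and the discussion just before the theorem shows that it is normalised by $z_k(a) = n_k^{-1/2} a$. Direct sums of normalisable dagger Frobenius algebras are again normalisable via $z = \bigoplus_k z_k$ (centrality and positive-definiteness are preserved componentwise), so $A$ carries such a structure. A short verification that the diagrammatic $*$ applied to each summand $\M_{n_k}$ recovers ordinary conjugate-transpose then matches this Frobenius structure with the given C*-algebra involution, up to the isomorphism provided by Artin--Wedderburn.

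\textbf{Main obstacle.} The most delicate step is the inner-product identity certifying that $L$ is a $*$-homomorphism. This is where centrality and positive-definiteness of $z$ are used in an essential way: centrality is what allows the normaliser to commute with multiplication during the diagram chase, while positive-definiteness is what ensures that the adjoint of $L_a$ is $L_{a^*}$ on the nose, rather than up to a rescaling that would spoil the C*-identity. Without the normaliser being built into the very definition of $*$, the $\dagger$-structure of $\FHilb$ and the algebraic $*$-structure of $A$ would fail to align, as one sees already by trying the argument on $\M_n$ with $z=1$ (which is normal only up to the scalar $n$).
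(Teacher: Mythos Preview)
Your overall plan---regular representation for the forward direction, Artin--Wedderburn for the converse---is a valid route and is essentially Vicary's argument, which the paper cites rather than reproduces. The paper's own proof is structured differently: it invokes~\cite{vicary:quantumalgebras} for the fact that \emph{every} dagger Frobenius algebra in $\FHilb$ (with no normalisability hypothesis) is a C*-algebra under the displayed involution, and that all finite-dimensional C*-algebras arise this way. The only new content is then the verification that every dagger Frobenius algebra in $\FHilb$ is automatically normalisable: one writes $A \cong \bigoplus_k \M_{n_k}$, computes the loop $\Tr_A(\whitemult)$ explicitly in the matrix-unit basis $e_{ij}^{(k)}$ to get $n_k\,\delta_{ij}$, and reads off the normaliser $e_{ij}^{(k)} \mapsto n_k^{-1/2} e_{ij}^{(k)}$. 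So the paper's logic is (dFA $\Leftrightarrow$ C*-algebra) $+$ (every dFA in $\FHilb$ is normalisable) $\Rightarrow$ (normalisable dFA $\Leftrightarrow$ C*-algebra), with the first clause outsourced.

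The substantive problem in your proposal is the role you assign to $z$. The displayed involution is built from the Frobenius dualiser alone and does not contain the normaliser; this is exactly why the paper can invoke Vicary's result for arbitrary dagger Frobenius algebras. Your argument for involutivity---two copies of $z$ combining into $z^2$ which is then ``cancelled by the normalisation equation''---therefore cannot be right as stated: the normalisation equation equates $z^2$ composed with the loop to the counit, not $z^2$ to the identity, so there is nothing for it to cancel. More importantly, your ``main obstacle'' paragraph has the dependence backwards. The identity $\langle L_a b \mid c\rangle = \langle b \mid L_{a^*} c\rangle$ holds for \emph{any} dagger Frobenius algebra in $\FHilb$, as your own test case shows: on $\M_n$ with the Hilbert--Schmidt inner product one has $\Tr((ab)^\dagger c) = \Tr(b^\dagger a^\dagger c) = \langle b \mid L_{a^\dagger} c\rangle$ on the nose, with no factor of $n$ anywhere. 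Normalisability plays no role in the forward implication; its purpose is to isolate the correct class of objects in an arbitrary dagger compact category, and the paper's point is precisely that in $\FHilb$ the condition is vacuous.
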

\begin{proof}
  Any dagger Frobenius algebra in $\FHilb$ is a C*-algebra
  under the involution above, and all finite-dimensional
  C*-algebras arise in this way~\cite{vicary:quantumalgebras}, so it
  suffices to prove that any 
  dagger Frobenius algebra $(A, \whitemult, \whiteunit)$ in $\FHilb$
  is normalisable. Since it is unitarily
  isomorphic to a C*-algebra of the form $\bigoplus_k \mathbb{M}_{n_k}$,
  there is an orthonormal basis $\{ e_{ij}^{(k)} : 0 \leq i,j < n_k
  \}$ for $A$, in terms of which $\whitemult$ is defined as
  $e_{ij}^{(k)} \otimes e_{i'j'}^{(k')} \mapsto \delta_{kk'} \delta_{ji'} e_{ij'}^{(k)}$.
  Use this to compute $\Tr_A(\whitemult)$ directly:
  \begin{align*}
    \Tr_A(\whitemult)(e_{ij}^{(k)}) 
    & = \sum_{i'j'k'}
    \left( e_{i'j'}^{(k')} \right)^\dag
    \whitemult \left(
      e_{ij}^{(k)} \otimes
      e_{i'j'}^{(k')}
    \right) \\
    & =
    \sum_{i'j'k'}
    \left( e_{i'j'}^{(k')} \right)^\dag
    \delta_{kk'} \delta_{ji'} e_{ij'}^{(k)} \\
    & = 
    \sum_{j'}
    \left( e_{jj'}^{(k)} \right)^\dag e_{ij'}^{(k)} \\
    & = \sum_{j'} \delta_{ij} 
    = n_k \delta_{ij}.
  \end{align*}
  Also $\whitecounit(e_{ij}^{(k)}) = \delta_{ij}$. Therefore
  $e_{ij}^{(k)} \mapsto n_k^{-1/2} e_{ij}^{(k)}$ defines a normaliser:
  it is positive and invertible, satisfies $\Tr_A(\whitemult)\circ
  (\whitenorm)^2 = \whitecounit$, and acts by a constant scalar on
  each summand of $A$ and so is central.  
\end{proof}

For future reference, we prove two lemmas about abstract C*-algebras, 
including an alternative form of the normalisability condition.
As a matter of convention, we define the following shorthands:
\ctikzfig{action_abbrev}
We can use any such shorthand without ambiguity by stating that we always
preserve the (cylic) ordering of inputs/outputs. That is, the left
input of $\whitemult$ will always be clockwise from the right input, and the right output
 of $\whitecomult$ will always be clockwise from the left output. This rule
 also applies to depictions of $\left( \whitemult \right)^*$ and
 $\left( \whitecomult \right)^*$:
 \ctikzfig{star_abbrev}

\begin{lemma}\label{lem:actions}
  Any symmetric dagger Frobenius algebra satisfies $\;%
\beginpgfgraphicnamed{sfa_ident}
\begin{tikzpicture}[dotpic]
	\begin{pgfonlayer}{nodelayer}
		\node [style=white dot] (0) at (-1.25, 0.5) {};
		\node [style=white dot] (1) at (-1.25, -0.25) {};
		\node [style=none] (2) at (-1.75, -0.75) {};
		\node [style=none] (3) at (-0.75, -0.75) {};
		\node [style=none] (4) at (-1.75, 1) {};
		\node [style=none] (5) at (-0.75, 1) {};
		\node [style=none] (6) at (2.5, 1) {};
		\node [style=none] (7) at (2.5, -0.75) {};
		\node [style=white dot] (8) at (1.25, -0) {};
		\node [style=none] (9) at (1.25, -0.75) {};
		\node [style=white dot] (10) at (2.5, -0) {};
		\node [style=none] (11) at (1.25, 1) {};
		\node [style=none] (12) at (0, -0) {$=$};
	\end{pgfonlayer}
	\begin{pgfonlayer}{edgelayer}
		\draw [style=diredge, bend right=15] (4.center) to (0);
		\draw [style=diredge, bend right=15] (0) to (5.center);
		\draw [style=diredge, bend right=15] (3.center) to (1);
		\draw [style=diredge, bend right=15] (1) to (2.center);
		\draw [style=diredge] (1) to (0);
		\draw [style=diredge] (11.center) to (8);
		\draw [style=diredge] (8) to (9.center);
		\draw [style=diredge] (7.center) to (10);
		\draw [style=diredge] (10) to (6.center);
		\draw [style=diredge, bend right=60, looseness=1.25] (10) to (8);
	\end{pgfonlayer}
\end{tikzpicture}}
\endpgfgraphicnamed\;$.
\end{lemma}
\begin{proof}
  Apply the Frobenius law and associativity.
  \[%
\beginpgfgraphicnamed{sfa_ident_pf}
\begin{tikzpicture}[dotpic]
	\begin{pgfonlayer}{nodelayer}
		\node [style=white dot] (0) at (-8.5, 0.5) {};
		\node [style=white dot] (1) at (-8.5, -0.5) {};
		\node [style=none] (2) at (-9, -1.25) {};
		\node [style=none] (3) at (-8, -1.25) {};
		\node [style=none] (4) at (-9, 1.25) {};
		\node [style=none] (5) at (-8, 1.25) {};
		\node [style=none] (6) at (-7.5, -0) {$=$};
		\node [style=none] (7) at (-6.75, -1.25) {};
		\node [style=none] (8) at (-6, -1.25) {};
		\node [style=white dot] (9) at (-6, 0.5) {};
		\node [style=white dot] (10) at (-6, -0.5) {};
		\node [style=none] (11) at (-6, 1.25) {};
		\node [style=none] (12) at (-6.75, 1.25) {};
		\node [style=none] (13) at (-5, -0) {$=$};
		\node [style=white dot] (14) at (-3, -0.75) {};
		\node [style=none] (15) at (-3.75, 1.5) {};
		\node [style=none] (16) at (-4.5, -1.5) {};
		\node [style=none] (17) at (-2, 1.5) {};
		\node [style=none] (18) at (-3, -1.5) {};
		\node [style=white dot] (19) at (-2.5, 0.5) {};
		\node [style=white dot] (20) at (-3, 1.25) {};
		\node [style=none] (21) at (-4, 0.25) {};
		\node [style=none] (22) at (-0.5, 1.5) {};
		\node [style=none] (23) at (-1.25, -1.5) {};
		\node [style=none] (24) at (1.25, 1.5) {};
		\node [style=none] (25) at (-0.75, 0.5) {};
		\node [style=none] (26) at (-1.75, -0) {$=$};
		\node [style=white dot] (27) at (0.5, -0.75) {};
		\node [style=white dot] (28) at (0, -0) {};
		\node [style=none] (29) at (0.5, -1.5) {};
		\node [style=white dot] (30) at (0.25, 1.25) {};
		\node [style=white dot] (31) at (3.25, -0) {};
		\node [style=none] (32) at (1.75, -0) {$=$};
		\node [style=none] (33) at (2.75, 0.5) {};
		\node [style=white dot] (34) at (3.75, -0.75) {};
		\node [style=none] (35) at (4.5, 1.5) {};
		\node [style=none] (36) at (5, 1.5) {};
		\node [style=white dot] (37) at (3.75, 1.25) {};
		\node [style=none] (38) at (2.25, -1.5) {};
		\node [style=none] (39) at (3.75, -1.5) {};
		\node [style=none] (40) at (7.5, 1.5) {};
		\node [style=white dot] (41) at (7.25, -0.75) {};
		\node [style=none] (42) at (5.75, -1.5) {};
		\node [style=none] (43) at (8.5, 1.5) {};
		\node [style=white dot] (44) at (6.75, 0.5) {};
		\node [style=none] (45) at (5.25, -0) {$=$};
		\node [style=none] (46) at (6.25, 1) {};
		\node [style=none] (47) at (7.25, -1.5) {};
		\node [style=none] (48) at (8.75, -0) {$=$};
		\node [style=none] (49) at (10.75, -1.25) {};
		\node [style=none] (50) at (9.5, 1.25) {};
		\node [style=none] (51) at (10.75, 1.25) {};
		\node [style=white dot] (52) at (9.5, -0) {};
		\node [style=none] (53) at (9.5, -1.25) {};
		\node [style=white dot] (54) at (10.75, -0) {};
	\end{pgfonlayer}
	\begin{pgfonlayer}{edgelayer}
		\draw [style=diredge, bend right=15] (4.center) to (0);
		\draw [style=diredge, bend right=15] (0) to (5.center);
		\draw [style=diredge, bend right=15] (3.center) to (1);
		\draw [style=diredge, bend right=15] (1) to (2.center);
		\draw [style=diredge] (1) to (0);
		\draw [style=diredge, in=-135, out=-90, looseness=1.50] (12.center) to (9);
		\draw [style=diredge] (9) to (11.center);
		\draw [style=diredge] (8.center) to (10);
		\draw [style=diredge, in=90, out=135, looseness=1.50] (10) to (7.center);
		\draw [style=diredge] (10) to (9);
		\draw [style=diredge, bend right=15] (19) to (17.center);
		\draw [style=diredge] (18.center) to (14);
		\draw [style=diredge] (14) to (19);
		\draw [style=diredge, in=-120, out=-90, looseness=2.50] (15.center) to (20);
		\draw [in=0, out=135, looseness=0.75] (14) to (21.center);
		\draw [style=diredge, in=90, out=180, looseness=0.75] (21.center) to (16.center);
		\draw [style=diredge] (19) to (20);
		\draw [style=diredge, in=-120, out=-90, looseness=2.50] (22.center) to (30);
		\draw [in=0, out=135, looseness=0.75] (28) to (25.center);
		\draw [style=diredge, in=90, out=180, looseness=0.75] (25.center) to (23.center);
		\draw [style=diredge, in=-60, out=45] (28) to (30);
		\draw [style=diredge] (29.center) to (27);
		\draw [style=diredge] (27) to (28);
		\draw [style=diredge, bend right=15] (27) to (24.center);
		\draw [style=diredge, in=-60, out=-90, looseness=2.50] (35.center) to (37);
		\draw [in=0, out=135, looseness=0.75] (31) to (33.center);
		\draw [style=diredge, in=90, out=180, looseness=0.75] (33.center) to (38.center);
		\draw [style=diredge, in=-135, out=45] (31) to (37);
		\draw [style=diredge] (39.center) to (34);
		\draw [style=diredge] (34) to (31);
		\draw [style=diredge, bend right=15] (34) to (36.center);
		\draw [in=0, out=135, looseness=0.75] (44) to (46.center);
		\draw [style=diredge, in=90, out=180, looseness=0.50] (46.center) to (42.center);
		\draw [style=diredge] (47.center) to (41);
		\draw [style=diredge, in=-135, out=142, looseness=1.25] (41) to (44);
		\draw [style=diredge, bend right=15] (41) to (43.center);
		\draw [style=diredge, in=-45, out=-90, looseness=1.50] (40.center) to (44);
		\draw [style=diredge] (50.center) to (52);
		\draw [style=diredge] (52) to (53.center);
		\draw [style=diredge] (49.center) to (54);
		\draw [style=diredge] (54) to (51.center);
		\draw [style=diredge, bend right=60, looseness=1.25] (54) to (52);
	\end{pgfonlayer}
\end{tikzpicture}}
\endpgfgraphicnamed\]
  The middle equation uses symmetry.
\end{proof}

\begin{lemma}\label{lem:normalisability}
  Any normalisable dagger Frobenius algebra satisfies $\;%
\beginpgfgraphicnamed{norm_alt}
\begin{tikzpicture}[dotpic]
	\begin{pgfonlayer}{nodelayer}
		\node [style=none] (0) at (1, -1.25) {};
		\node [style=none] (1) at (1, 1.25) {};
		\node [style=none] (2) at (-1.5, 1.5) {};
		\node [style=none] (3) at (-1.5, -1.5) {};
		\node [style=none] (4) at (0, -0) {$=$};
		\node [style=white dot] (5) at (-1.5, 0.75) {};
		\node [style=white dot] (6) at (-1.5, -0.75) {};
		\node [style=white norm] (7) at (-2, -0) {};
		\node [style=white norm] (8) at (-1, -0) {};
	\end{pgfonlayer}
	\begin{pgfonlayer}{edgelayer}
		\draw [style=diredge] (0.center) to (1.center);
		\draw [style=diredge] (5) to (2.center);
		\draw [style=diredge] (3.center) to (6);
		\draw [style=diredge, in=-90, out=30] (6) to (8);
		\draw [style=diredge, in=-30, out=90] (8) to (5);
		\draw [style=diredge, in=90, out=-150] (5) to (7);
		\draw [style=diredge, in=150, out=-90] (7) to (6);
	\end{pgfonlayer}
\end{tikzpicture}}
\endpgfgraphicnamed\;$.
\end{lemma}
\begin{proof}
  Use centrality of the normaliser, associativity, and unitality.
  \[%
\beginpgfgraphicnamed{norm_alt_pf}
\begin{tikzpicture}[dotpic]
	\begin{pgfonlayer}{nodelayer}
		\node [style=none] (0) at (-1.5, 1.75) {};
		\node [style=none] (1) at (-1.5, -1.75) {};
		\node [style=none] (2) at (0, -0) {$=$};
		\node [style=white dot] (3) at (-1.5, 1) {};
		\node [style=white dot] (4) at (-1.5, -1) {};
		\node [style=white norm] (5) at (-2, -0) {};
		\node [style=white norm] (6) at (-1, -0) {};
		\node [style=white dot] (7) at (1.25, 1.25) {};
		\node [style=none] (8) at (1.25, -2) {};
		\node [style=white dot] (9) at (1.25, -0) {};
		\node [style=none] (10) at (1.25, 2) {};
		\node [style=white norm] (11) at (1.25, -1.25) {};
		\node [style=white norm] (12) at (1.25, -0.75) {};
		\node [style=none] (13) at (4.5, -2) {};
		\node [style=white dot] (14) at (4.25, 1) {};
		\node [style=none] (15) at (4.75, 2) {};
		\node [style=white norm] (16) at (4.5, -0.75) {};
		\node [style=white norm] (17) at (4.5, -1.25) {};
		\node [style=white dot] (18) at (4.25, -0) {};
		\node [style=none] (19) at (2.5, -0) {$=$};
		\node [style=none] (20) at (5.75, -0) {$=$};
		\node [style=white dot] (21) at (8, -0) {};
		\node [style=white norm] (22) at (8, -0.75) {};
		\node [style=none] (23) at (8.75, 2) {};
		\node [style=none] (24) at (8, -2) {};
		\node [style=white norm] (25) at (8, -1.25) {};
		\node [style=none] (26) at (6.75, 0.5) {};
		\node [style=white dot] (27) at (7.25, 1) {};
		\node [style=none] (28) at (6.75, 1.75) {};
		\node [style=none] (29) at (9.5, -0) {$=$};
		\node [style=none] (30) at (10.5, 1.75) {};
		\node [style=white norm] (31) at (11.25, -0.25) {};
		\node [style=none] (32) at (10.5, 0.5) {};
		\node [style=white norm] (33) at (11.25, 0.25) {};
		\node [style=none] (34) at (12.5, 2) {};
		\node [style=white dot] (35) at (11.75, -1) {};
		\node [style=none] (36) at (11.75, -2) {};
		\node [style=white dot] (37) at (11, 1) {};
		\node [style=none] (38) at (16.75, -0) {$=$};
		\node [style=white dot] (39) at (17.5, 0.5) {};
		\node [style=white dot] (40) at (18, -0.75) {};
		\node [style=none] (41) at (18.75, 1.75) {};
		\node [style=none] (42) at (18, -1.75) {};
		\node [style=none] (43) at (19.25, -0) {$=$};
		\node [style=none] (44) at (20.25, -1.25) {};
		\node [style=none] (45) at (20.25, 1.25) {};
		\node [style=none] (46) at (3.75, 1.5) {};
		\node [style=none] (47) at (3.75, -0.5) {};
		\node [style=none] (48) at (13.25, -0) {$=$};
		\node [style=white norm] (49) at (14.25, 0.5) {};
		\node [style=none] (50) at (16.25, 2) {};
		\node [style=white dot] (51) at (15.25, -1) {};
		\node [style=none] (52) at (15.25, -2) {};
		\node [style=none] (53) at (15, 2) {};
		\node [style=white dot] (54) at (14.5, 1.25) {};
		\node [style=none] (55) at (15, 0.75) {};
		\node [style=white norm] (56) at (14.25, -0) {};
		\node [style=none] (57) at (13.25, 0.75) {\small $(*)$};
	\end{pgfonlayer}
	\begin{pgfonlayer}{edgelayer}
		\draw [style=diredge] (3) to (0.center);
		\draw [style=diredge] (1.center) to (4);
		\draw [style=diredge, in=-90, out=30] (4) to (6);
		\draw [style=diredge, in=-30, out=90] (6) to (3);
		\draw [style=diredge, in=90, out=-150] (3) to (5);
		\draw [style=diredge, in=150, out=-90] (5) to (4);
		\draw [style=diredge] (7) to (10.center);
		\draw [style=diredge] (8.center) to (11);
		\draw (11) to (12);
		\draw [style=diredge] (12) to (9);
		\draw [style=diredge, bend right=45, looseness=1.25] (9) to (7);
		\draw [style=diredge, bend right=45, looseness=1.25] (7) to (9);
		\draw [style=diredge] (14) to (15.center);
		\draw [style=diredge] (13.center) to (17);
		\draw (17) to (16);
		\draw [style=diredge, in=-72, out=90] (16) to (18);
		\draw [style=diredge] (18) to (14);
		\draw [style=diredge, bend right=15] (21) to (23.center);
		\draw [style=diredge] (24.center) to (25);
		\draw (25) to (22);
		\draw [style=diredge] (22) to (21);
		\draw [in=0, out=90] (27) to (28.center);
		\draw [style=diredge, in=-105, out=0] (26.center) to (27);
		\draw [in=180, out=180, looseness=1.50] (28.center) to (26.center);
		\draw [style=diredge, in=-45, out=135] (21) to (27);
		\draw [style=diredge, in=-90, out=45, looseness=0.75] (35) to (34.center);
		\draw (31) to (33);
		\draw [in=0, out=90] (37) to (30.center);
		\draw [style=diredge, in=-105, out=0] (32.center) to (37);
		\draw [in=180, out=180, looseness=1.50] (30.center) to (32.center);
		\draw [style=diredge, in=-90, out=135] (35) to (31);
		\draw [style=diredge, in=-60, out=90] (33) to (37);
		\draw [style=diredge] (36.center) to (35);
		\draw [style=diredge, in=-90, out=45, looseness=0.75] (40) to (41.center);
		\draw [style=diredge] (42.center) to (40);
		\draw [style=diredge, in=-90, out=142] (40) to (39);
		\draw [style=diredge] (44.center) to (45.center);
		\draw [in=0, out=120] (14) to (46.center);
		\draw [in=180, out=180] (46.center) to (47.center);
		\draw [style=diredge, in=-120, out=0] (47.center) to (18);
		\draw [style=diredge, in=-90, out=45, looseness=0.75] (51) to (50.center);
		\draw (56) to (49);
		\draw [in=180, out=90] (54) to (53.center);
		\draw [style=diredge, in=-75, out=180] (55.center) to (54);
		\draw [in=0, out=0, looseness=1.50] (53.center) to (55.center);
		\draw [style=diredge, in=-90, out=135] (51) to (56);
		\draw [style=diredge, in=-120, out=90] (49) to (54);
		\draw [style=diredge] (52.center) to (51);
	\end{pgfonlayer}
\end{tikzpicture}}
\endpgfgraphicnamed\]
  The marked equation follows from Proposition~\ref{prop:symmetrictrace}.
\end{proof}

To end this section where it started, reconsider the algebra $\M_n$ in
$\FHilb$. It is isomorphic to $(\C^n)^* \otimes \C^n$ by $e_{ij}
\mapsto \bra{i} \otimes \ket{j}$, where $\{\ket{0},\ldots,\ket{n}\}$
is any orthonormal basis of $\C^n$. As it turns out, this way of
constructing C*-algebras works in the abstract, as long as the
category is not too ill-behaved. To be precise, we call an  
object $X$ in a dagger compact category 
\emph{positive-dimensional} if there is a positive definite 
$z \colon I \to I$ satisfying 
\ctikzfig{pos_dim}
A dagger compact closed category is called positive-dimensional if all
its objects are. All the categories we will consider are positive-dimensional.

\begin{proposition}\label{prop:abstractmatrixalgebras}
  In a positive-dimensional dagger compact category, every
  object of the form $H^* \otimes H$ carries a canonical normalisable
  dagger Frobenius algebra with the following multiplication and unit:
  \ctikzfig{pants_alg}
\end{proposition}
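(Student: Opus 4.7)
The plan is to tackle this in two stages: first, establish that the stated multiplication and unit form a dagger Frobenius algebra on $H^* \otimes H$ (this is the classical ``pair of pants'' construction), and second, use the positive-dimensionality hypothesis on $H$ to produce a normaliser.

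For the algebra axioms, I would work purely diagrammatically. The multiplication $\whitemult: (H^*\otimes H) \otimes (H^*\otimes H) \to H^*\otimes H$ uses $\varepsilon_H$ on the middle pair, and the unit is $\eta_H$. Associativity of $\whitemult$ reduces to the fact that both bracketings produce the same diagram with two interior $\varepsilon_H$'s connecting adjacent $H/H^*$ wires; unitality follows immediately from the snake equations for $H$; and the Frobenius law likewise reduces to an application of the snake equations, since both sides give the diagram in which one $H^*$ wire runs from the leftmost input to the leftmost output and one $H$ wire from the rightmost input to the rightmost output, with a single internal cup/cap joining them. By the dagger compact axiom $\varepsilon_A^\dagger = \eta_{A^*}$, the induced counit is (up to the canonical $H^{**} \cong H$) the cap $\varepsilon_{H^*}$, and the induced comultiplication is obtained by ``turning the pants upside down'', which preserves the Frobenius structure.

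For normalisability, I would exploit positive-dimensionality. Let $z_H \colon I \to I$ be a positive definite scalar witnessing that $H$ is positive-dimensional. The crucial computation is that the ``loop'' in the normalisability equation, namely the partial trace $\Tr_A(\whitemult)$ for $A = H^* \otimes H$, closes up into a combination of loops on $H$ and $H^*$; each such loop evaluates (via the pos\_dim equation) to a power of $z_H$, so $\Tr_A(\whitemult)$ factors through $\whitecounit$ times a positive definite scalar determined by $z_H$. I would then define $\whitenorm$ to be a scalar multiple $c \cdot 1_{H^* \otimes H}$ whose square inverts that scalar; centrality is automatic because scalar multiples of the identity commute with $\whitemult$, and positive definiteness of $\whitenorm$ follows from positive definiteness of the underlying scalar.

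The main obstacle is the scalar bookkeeping in this last step: matching the exact power of $z_H$ produced by the loop to the form demanded by Definition~\ref{def:normalisable}, and ensuring that the required inverse-square-root scalar can genuinely be obtained from $z_H$ inside the category. This is where the positive-dimensionality hypothesis pulls its weight: since $z_H$ is a positive isomorphism, its inverse is again a positive isomorphism, and the desired normaliser can be read off from it. Once this scalar is identified, checking the normalisability diagram reduces to the snake equations together with the pos\_dim equation applied twice (once for each internal loop), mirroring the concrete calculation $z(a) = n^{-1/2}a$ carried out for $\M_n$ in $\FHilb$.
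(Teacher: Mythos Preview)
Your approach matches the paper's: first observe that the pants algebra is a dagger Frobenius algebra by compactness, then use the positive-dimensionality scalar to build a normaliser that is a scalar multiple of the identity. The paper dispatches the first part in one line (``follows immediately from compactness'') and then shows that $z \cdot 1_{H^*\otimes H}$ is a normaliser, where $z$ is the very scalar supplied by positive-dimensionality of $H$.

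One point where you make your own life harder than necessary: your ``main obstacle'' is not really an obstacle. You worry about extracting an inverse-square-root of the loop scalar from $z_H$, but the positive-dimensionality hypothesis is formulated so as to hand you precisely the scalar you need, already in square-root form: it gives a positive definite $z\colon I\to I$ with $(z^2 \circ \Tr_H(1_H))\otimes 1_H = 1_H$. So no square-root extraction inside the category is required; $z$ itself (tensored with the identity) is the normaliser. Relatedly, your expectation of applying the pos\_dim equation \emph{twice} (``once for each internal loop'') is off: when you compute the loop for the pants algebra there is a single closed $H$-circle, producing one factor of $\dim(H)$, and a single application of the pos\_dim equation kills it. With that correction the scalar bookkeeping is trivial, and your argument becomes exactly the paper's.
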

\begin{proof}
  It follows immediately from compactness that this is a dagger
  Frobenius algebra.
  Positive-dimensionality provides a positive definite scalar
  $z$ that satisfies $(z^2 \circ \Tr_X(1_X)) \otimes 1_X = 1_X$. Then:
  \ctikzfig{pants_norm}
  Hence $1_{H^*\otimes H} \otimes z$ is a normaliser.
\end{proof}

The abstract C*-algebra of the previous proposition is called an
\emph{abstract matrix algebra}, and is also denoted by $\B(H)$.

\section{Abstract completely positive maps}\label{sec:abstractcpmaps}

Having abstracted C*-algebras from $\FHilb$ to arbitrary categories,
this section does the same for completely positive maps. This will
lead to a fully abstract procedure, called the
\emph{CP*--construction}, that turns any dagger compact category (like \FHilb)
into the category of abstract C*-algebras and abstract completely positive maps.

First recall the definition of completely positive maps between
C*-algebras. An element $a$ of a C*-algebra $A$ is \emph{positive}
when it is of the form $a=b^\star b$ for some $b \in A$. A linear function $f
\colon A \to B$ between C*-algebras is \emph{positive} when it takes
positive elements to positive elements. It is \emph{completely
positive} when the function $f \otimes \id \colon A \otimes
\M_n \to B \otimes \M_n$ is positive for every natural number $n$.
Completely positive maps form a large and well-studied class of
transformations that send (possibly unnormalised) states of open
systems to (possibly unnormalised) states, and hence account for dynamics~\cite{bhatia:positivematrices,paulsen:completelypositive,stormer:positive}.
There is some debate about whether other maps are in fact
unphysical~\cite{alicki:cp,pechukas:cp,shajisudarshan:afraid,zyczkowskibengtsson:cp}.  

This definition translates to abstract C*-algebras as follows: an element $a
\colon I \to A$ of an abstract C*-algebra
$(A,\whitemult,\whiteunit)$  is positive when $a=\whitemult(b^\star
\otimes b)$ for some $b \colon I \to A$. Expanding definitions, we see
that $a \colon I \to A$ is positive when
\ctikzfig{positive_abstract}
for some $b \colon I \to A$.
By Lemma~\ref{lem:normalisability}, this implies:
\ctikzfig{positive_abstract_2}
for some object $X$ and $c \colon I \to X \otimes A$; the middle
equation follows from Lemma~\ref{lem:actions}.

In fact, for Hilbert spaces, the following two characterisations
of positive elements $a$ are equivalent:
\ctikzfig{positive_elem}
However, in other categories, the implication from left to right is
strict. For this reason, we will take the weaker notion to
define an abstract positive element.

This abstract description of positive elements generalises to 
maps $f \colon A \to B$ between abstract C*-algebras
$(A,\whitemult,\whiteunit)$ and $(B,\graymult,\grayunit)$ as follows:
there are an object $X$ and a map $g \colon A \to X \otimes B$ satisfying
\begin{equation}\label{eq:cpstar}
\beginpgfgraphicnamed{cpstar_condition}
\begin{tikzpicture}[dotpic]
	\begin{pgfonlayer}{nodelayer}
		\node [style=none] (0) at (0, -0) {$=$};
		\node [style=square box] (1) at (-2, -0) {$f$};
		\node [style=gray dot] (2) at (-2, 1) {};
		\node [style=none] (3) at (-1.25, 2) {};
		\node [style=none] (4) at (-2.75, 2) {};
		\node [style=white dot] (5) at (-2, -1) {};
		\node [style=none] (6) at (-1.25, -2) {};
		\node [style=none] (7) at (-2.75, -2) {};
		\node [style=square box, minimum width=1 cm, minimum height=0.75 cm] (8) at (2, -0) {$g_*$};
		\node [style=none] (9) at (4.5, -0.75) {};
		\node [style=none] (10) at (2, -0.75) {};
		\node [style=none] (11) at (2, -1.75) {};
		\node [style=none] (12) at (4.5, -1.75) {};
		\node [style=none] (13) at (5, 0.75) {};
		\node [style=none] (14) at (1.5, 0.75) {};
		\node [style=none] (15) at (5, 2) {};
		\node [style=none] (16) at (1.5, 2) {};
		\node [style=square box, minimum width=1 cm, minimum height=0.75 cm] (17) at (4.5, -0) {$g$};
		\node [style=none] (18) at (2.5, 0.75) {};
		\node [style=none] (19) at (4, 0.75) {};
	\end{pgfonlayer}
	\begin{pgfonlayer}{edgelayer}
		\draw [style=diredge] (1) to (2);
		\draw [style=diredge, bend right] (4.center) to (2);
		\draw [style=diredge, bend right] (2) to (3.center);
		\draw [style=diredge, bend right] (5) to (7.center);
		\draw [style=diredge, bend right] (6.center) to (5);
		\draw [style=diredge] (5) to (1);
		\draw [style=diredge] (12.center) to (9.center);
		\draw [style=diredge] (13.center) to (15.center);
		\draw [style=diredge] (10.center) to (11.center);
		\draw [style=diredge] (16.center) to (14.center);
		\draw [style=diredge, in=90, out=90, looseness=1.50] (19.center) to (18.center);
	\end{pgfonlayer}
\end{tikzpicture}}
\endpgfgraphicnamed
\end{equation}
The \textit{positive elements} of $A$ are then precisely the maps $I \to A$ 
satisfying this condition. Equation~\eqref{eq:cpstar} is called the
\emph{CP*--condition}. Proposition~\ref{prop:completepositivity} below
shows that this is precisely the right condition to capture complete
positivity abstractly. But before that, the following lemma records that
it indeed makes sense to take tensor products of abstract C*-algebras.

\begin{lemma}\label{lem:tensorproducts}
  If $(A,\whitemult,\whiteunit,\whitenorm)$ and
  $(B,\graymult,\grayunit,\graynorm)$ are normalisable dagger
  Frobenius algebras in a dagger compact category, then so is 
  $(A \otimes B, \prodmult{white dot}{gray dot}, \whiteunit \grayunit,
  \whitenorm \graynorm)$.
\end{lemma}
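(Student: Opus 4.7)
The plan is to verify the four requirements for normalisable dagger Frobenius algebras one by one on $A \otimes B$, with the candidate normaliser $\whitenorm \otimes \graynorm$. The tensor multiplication $\prodmult{white dot}{gray dot}$ is defined (as is standard) by applying $\sigma_{A,B} \otimes \sigma_{A,B}$ appropriately to rearrange $(A \otimes B) \otimes (A \otimes B)$ into $(A \otimes A) \otimes (B \otimes B)$, then applying $\whitemult \otimes \graymult$; the unit is $\whiteunit \otimes \grayunit$.

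First I would check that $(A \otimes B, \prodmult{white dot}{gray dot}, \whiteunit \otimes \grayunit)$ is a dagger Frobenius algebra. This is a routine componentwise verification: associativity, unitality, and the Frobenius law hold separately for the white and grey structures, and the symmetries built into the definition of the tensor multiplication disentangle so that the two colours never interact. Diagrammatically the white and grey strings run in parallel, and the axioms reduce to their versions on $A$ and on $B$.

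Next I would verify the three extra properties of $\whitenorm \otimes \graynorm$. Centrality for $\prodmult{white dot}{gray dot}$ reduces, after unpacking the swaps, to centrality of $\whitenorm$ for $\whitemult$ and of $\graynorm$ for $\graymult$, applied independently in the two colours. Positive-definiteness of a tensor of positive-definite maps is standard: writing $\whitenorm = p^\dagger \circ p$ and $\graynorm = q^\dagger \circ q$ with $p,q$ isomorphisms, we get $\whitenorm \otimes \graynorm = (p \otimes q)^\dagger \circ (p \otimes q)$ with $p \otimes q$ an isomorphism in the dagger compact category.

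The final step is the loop equation $\Tr_{A\otimes B}(\prodmult{white dot}{gray dot}) \circ (\whitenorm \otimes \graynorm)^2 = (\whitecounit) \otimes (\graycounit)$. The key observation — and the only place anything nontrivial happens — is that the ``loop'' on the tensor algebra factors into the tensor of the two individual loops: the middle swap in $\prodmult{white dot}{gray dot}$, combined with the cup/cap on $A \otimes B$ that forms $\Tr_{A\otimes B}$, untangles into a loop on $A$ separate from a loop on $B$, because the dagger symmetric monoidal structure makes the two colours topologically independent. Once that factorisation is established, applying the normalisability equation for $A$ and for $B$ separately gives $\whitecounit \otimes \graycounit$, which is exactly the counit of the tensor algebra.

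The main obstacle is this last factorisation: it is visually obvious in the graphical calculus but needs a careful bookkeeping of the symmetry maps used to define $\prodmult{white dot}{gray dot}$ and the trace on $A \otimes B$. Everything else is a straightforward ``do it in each colour'' argument.
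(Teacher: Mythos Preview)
Your approach is correct and is exactly what the paper does --- it simply asserts that associativity, unitality, the Frobenius law, and normalisability all follow from the graphical calculus for dagger compact categories, which is precisely the componentwise, colour-by-colour verification you spell out. One small slip: from $\whitenorm = p^\dagger \circ p$ being a positive \emph{isomorphism} you cannot in general conclude that $p$ itself is an isomorphism (think of an isometry in $\FHilb$), but you do not need this --- positivity of $\whitenorm\otimes\graynorm$ already follows from the factorisation $(p\otimes q)^\dagger\circ(p\otimes q)$, and its invertibility follows directly from invertibility of $\whitenorm$ and $\graynorm$ by functoriality of $\otimes$.
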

\begin{proof}
  All the required properties -- associativity, unitality, the
  Frobenius law, and normalisability -- follow easily from the
  graphical calculus for dagger compact categories.
\end{proof}

Incidentally, Lemmas~\ref{lem:actions} and~\ref{lem:normalisability}
provide an alternative form of the CP*--condition that is sometimes
more convenient: equation~\eqref{eq:cpstar} holds if and only if
\begin{equation}\label{eq:cpstarconvolution}
\beginpgfgraphicnamed{cpstar_conv_form}
\begin{tikzpicture}[dotpic]
	\begin{pgfonlayer}{nodelayer}
		\node [style=none] (0) at (2, -0.75) {};
		\node [style=square box, minimum width=1 cm, minimum height=0.75 cm] (1) at (4.5, 0) {$h$};
		\node [style=none] (2) at (0, 0) {$=$};
		\node [style=square box, minimum width=1 cm, minimum height=0.75 cm] (3) at (2, 0) {$h_*$};
		\node [style=none] (4) at (5, 0.75) {};
		\node [style=none] (5) at (4, 0.75) {};
		\node [style=none] (6) at (-2, 1.75) {};
		\node [style=none] (7) at (4.5, -0.75) {};
		\node [style=square box] (8) at (-2, 0) {$f$};
		\node [style=none] (9) at (1.5, 0.75) {};
		\node [style=none] (10) at (-2, -1.75) {};
		\node [style=none] (11) at (2.5, 0.75) {};
		\node [style=gray dot] (12) at (3.25, 2) {};
		\node [style=white dot] (13) at (3.25, -1.75) {};
		\node [style=none] (14) at (3.25, 2.75) {};
		\node [style=none] (15) at (3.25, -2.5) {};
	\end{pgfonlayer}
	\begin{pgfonlayer}{edgelayer}
		\draw [style=diredge] (8) to (6.center);
		\draw [style=diredge] (10.center) to (8);
		\draw [style=diredge, in=90, out=90, looseness=1.50] (5.center) to (11.center);
		\draw [style=diredge, in=0, out=90] (4.center) to (12);
		\draw [style=diredge, in=90, out=180] (12) to (9.center);
		\draw [style=diredge, in=180, out=-90] (0.center) to (13);
		\draw [style=diredge, in=-90, out=0] (13) to (7.center);
		\draw [style=diredge] (15.center) to (13);
		\draw [style=diredge] (12) to (14.center);
	\end{pgfonlayer}
\end{tikzpicture}}
\endpgfgraphicnamed
\end{equation}
for some object $X$ and morphism $h \colon A \to X \otimes B$.

Proposition~\ref{prop:completepositivity} below shows that if a map $A \to
B$ satisfies the CP*--condition~\eqref{eq:cpstar}, then its
composition with another map $I \to A$ satisfying that condition still
satisfies that condition. It is in fact easier to first prove the more general result that the CP*--condition is closed under composition, \ie that maps satisfying~\eqref{eq:cpstar} form a category. In fact, the rest of this section shows that if $\V$ is a dagger
compact category, then so is the category of abstract C*-algebras in
$\V$ and maps satisfying~\eqref{eq:cpstar}, that we now officially define.

\begin{definition}\label{def:cpstar}
  Given a dagger compact category $\V$, we define the data for a new
  category $\CPs[\V]$.
  Objects are normalizable dagger Frobenius algebras in $\V$. 
  Morphisms $(A,\whitemult) \to (B,\graymult)$ are morphisms $f \colon
  A \to B$ in $\V$ satisfying the CP*--condition~\eqref{eq:cpstar}. 
\end{definition}

The next theorem shows that $\CPs[\V]$ is a well-defined category
inheriting composition and identities from $\V$. In fact, it also
inherits tensor products from $\V$ by Lemma~\ref{lem:tensorproducts},
and then becomes a dagger compact category.

\begin{theorem}\label{thm:cpstar}
  If $\V$ is a dagger compact category, $\CPs[\V]$ is again a
  well-defined dagger compact category.
\end{theorem}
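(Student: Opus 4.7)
The plan is to check each piece of structure in turn, working throughout with the convolution form \eqref{eq:cpstarconvolution} of the CP*--condition, since composing Stinespring-style ``sandwich'' expressions is cleaner when the outer structure is a single multiplication on the codomain. For identity morphisms on $(A,\whitemult,\whiteunit,\whitenorm)$, the witness $h = 1_A \colon A \to I \otimes A$ makes \eqref{eq:cpstarconvolution} collapse to $1_A$ by the defining equation of the normaliser together with Lemma~\ref{lem:normalisability}. For composition, given CP*--maps $f \colon A \to B$ and $f' \colon B \to C$ with witnesses $h$ and $h'$, I would take $(1_X \otimes h') \circ h \colon A \to X \otimes Y \otimes C$ as witness for $f' \circ f$, then verify that the nested convolutions telescope into a single one.

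For the monoidal structure on $\CPs[\V]$, Lemma~\ref{lem:tensorproducts} handles objects. On morphisms, if $f \colon A \to B$ and $f' \colon A' \to B'$ have witnesses $h$ and $h'$, then $(1_X \otimes \sigma_{Y,B} \otimes 1_{B'}) \circ (h \otimes h') \colon A \otimes A' \to X \otimes Y \otimes B \otimes B'$ witnesses $f \otimes f'$ for the componentwise algebra structure, since the tensor multiplication and normaliser factor through those of each component. Unitarity of the coherence isomorphisms $\alpha$, $\lambda$, $\rho$, $\sigma$ in $\CPs[\V]$ then reduces to their unitarity in $\V$, together with the observation that any Frobenius algebra isomorphism trivially satisfies the CP*--condition.

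For the dagger, I would reshape $g$ from \eqref{eq:cpstar} by applying $\dagger$ and bending the $X$-wire back with the cup of $\V$ to produce a candidate witness for $f^\dagger$, then use symmetry of normalisable Frobenius algebras (Proposition~\ref{prop:normalisableimpliessymmetric}, together with Proposition~\ref{prop:symmetrictrace}) to flip the trace-shape of the CP*--diagram top-to-bottom, converting the condition for $f$ into that for $f^\dagger$. For the compact structure, each object $(A,\whitemult,\whiteunit,\whitenorm)$ is self-dual in $\CPs[\V]$: the Frobenius cup and cap of \eqref{eq:frob-snake-ids}, decorated by normalisers where necessary, satisfy the CP*--condition by Lemmas~\ref{lem:actions} and~\ref{lem:normalisability}; the snake equations descend from those of $\V$; and $\varepsilon^\dagger = \eta$ reduces to $\whitemult^\dagger = \whitecomult$.

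The main obstacle will be dagger closure: the CP*--condition is visibly asymmetric between domain and codomain (the $B$-side carries a ``sandwich'' multiplication while the $A$-side carries counits, normalisers, and cups), so converting the hypothesis for $f$ into the corresponding statement for $f^\dagger$ requires both the symmetry of normalisable algebras and freely switching between \eqref{eq:cpstar} and \eqref{eq:cpstarconvolution} so that each Stinespring ingredient can be moved to the appropriate side of the diagram.
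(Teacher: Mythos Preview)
Most of your plan is sound and close to the paper's argument: identities, composition, and the tensor of morphisms go through essentially as you sketch (the paper stays with form~\eqref{eq:cpstar} throughout rather than~\eqref{eq:cpstarconvolution}, taking $g$ to be the comultiplication as identity witness via Lemma~\ref{lem:actions}). For the dagger the paper's argument is simpler than yours: the left-hand side of~\eqref{eq:cpstar} for $f^\dagger$ is literally the dagger of the left-hand side for $f$, hence again of the required positive form, with no appeal to symmetry or wire-bending needed.

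The compactness step, however, contains a genuine error. You assert that $(A,\whitemult)$ is \emph{self}-dual in $\CPs[\V]$, with the Frobenius cup and cap of~\eqref{eq:frob-snake-ids} as the witnessing morphisms. This already fails in $\FHilb$ for $A=\M_2$: the Frobenius cap is the functional $a\otimes b\mapsto\Tr(ab)$ on $\M_2\otimes\M_2\cong\M_4$, and on the positive rank-one element $\ket{v}\bra{v}$ with $v=e_1\otimes e_2-e_2\otimes e_1$ it evaluates to $-2$. So the Frobenius cap is not even positive as a map $(\M_2,\cdot)\otimes(\M_2,\cdot)\to I$, and decorating by normalisers only rescales. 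The point is that duality in $\CPs[\V]$ must respect the algebra structure, not merely the underlying object: the paper takes the dual of $(A,\whitemult)$ to be the \emph{transposed} algebra on $A^*$, with cap inherited directly from the compact structure of $\V$, and verifies the CP*-condition for it via Lemma~\ref{lem:actions}. Your Frobenius-cap proposal would work only for commutative algebras.
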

\begin{proof}
  Identity maps $\id[A] \colon (A, \whitemult) \to (A,
  \whitemult)$ satisfy the $\CPs$-condition by Lemma~\ref{lem:actions},
  where the role of $g$ in equation~\eqref{eq:cpstar} is played by
  $\whitecomult$.
  
  Next, suppose $f \colon (A, \whitemult) \to (B, \graymult)$ and
  $g \colon (B, \graymult) \to (C, \blackmult)$ satisfy the
  CP*--condition. It then follows from Lemma~\ref{lem:normalisability}
  that their composition does, too.
  \ctikzfig{compose_cpstar}
  Thus $\CPs[\V]$ is indeed a well-defined category.
  
  Lemma~\ref{lem:tensorproducts} gives monoidal structure on the level
  of objects. Given a morphism $f \colon (A, \whitemult) \rightarrow (C,
  \dotmult{alt white dot})$ with Kraus map $h$, and $g \colon (B, \graymult) \rightarrow (D,
  \dotmult{alt gray dot})$ with Kraus map $i$, then 
  $f \otimes g \colon
       (A \otimes B, \prodmult{white dot}{gray dot}) \rightarrow
       (C \otimes D, \prodmult{alt white dot}{alt gray dot})$
  satisfies the CP*--condition: 
  \ctikzfig{cpstar_monoidal}
  Note that $(I, \rho_I)$, where $\rho_I \colon I \otimes I \to
  I$ is the coherence isomorphism of $\V$,  is a normalisable dagger
  Frobenius algebra by the coherence theorem.
  Using this definition of $\otimes$ and $I$, the coherence isomorphisms
  $\alpha$, $\lambda$, and $\rho$ from $\V$ trivially satisfy the
  CP*--condition. Thus $\CPs[\V]$ is a monoidal category. 
  
  To show that $\CPs[\V]$ inherits symmetry, it suffices to
  show that the swap map $\sigma_{A,B} \colon A \otimes B \rightarrow B
  \otimes A$ of $\V$ lifts to a morphism 
  $\sigma_{A,B} \colon
      (A \otimes B, \prodmult{white dot}{gray dot}) \to
      (B \otimes A, \prodmult{gray dot}{white dot})$
  in $\CPs[\V]$. This can be done with two applications of Lemma \ref{lem:actions}.
  \ctikzfig{cpstar_symm}
  Thus $\CPs[\V]$ is a symmetric monoidal category.
  
  The category $\CPs[\V]$ also inherits the dagger from $\V$. If $f
  \colon (A, \whitemult) \rightarrow (B, \graymult)$ satisfies~\eqref{eq:cpstar},
   then so too does $f^\dag$: because
  $
    \whiteaction \circ f^\dag \circ \graycoaction =
    \left( \grayaction \circ f \circ \whitecoaction \right)^\dag
  $, 
  \[%
\beginpgfgraphicnamed{cpstar_dagger}
\begin{tikzpicture}[dotpic]
	\begin{pgfonlayer}{nodelayer}
		\node [style=none] (0) at (0, -0) {$=$};
		\node [style=square box] (1) at (-2, -0) {$f^\dagger$};
		\node [style=white dot] (2) at (-2, 1.25) {};
		\node [style=none] (3) at (-1.25, 2.25) {};
		\node [style=none] (4) at (-2.75, 2.25) {};
		\node [style=gray dot] (5) at (-2, -1.25) {};
		\node [style=none] (6) at (-1.25, -2.25) {};
		\node [style=none] (7) at (-2.75, -2.25) {};
		\node [style=square box, minimum width=1 cm, minimum height=0.75 cm] (8) at (2, -0) {$g^*$};
		\node [style=none] (9) at (5, -0.75) {};
		\node [style=none] (10) at (1.5, -0.75) {};
		\node [style=none] (11) at (1.5, -2.25) {};
		\node [style=none] (12) at (5, -2.25) {};
		\node [style=none] (13) at (4.5, 0.75) {};
		\node [style=none] (14) at (2, 0.75) {};
		\node [style=none] (15) at (4.5, 2.25) {};
		\node [style=none] (16) at (2, 2.25) {};
		\node [style=square box, minimum width=1 cm, minimum height=0.75 cm] (17) at (4.5, -0) {$g^\dagger$};
		\node [style=none] (18) at (4, -0.75) {};
		\node [style=none] (19) at (2.5, -0.75) {};
		\node [style=none] (20) at (11.25, -0.5) {};
		\node [style=none] (21) at (7.75, -2.5) {};
		\node [style=none] (22) at (13.25, -0.5) {};
		\node [style=none] (23) at (13.25, -2.5) {};
		\node [style=none] (24) at (6.25, -0) {$=$};
		\node [style=none] (25) at (12.75, 2.5) {};
		\node [style=none] (26) at (8.25, 2.5) {};
		\node [style=square box, minimum width=1 cm, minimum height=0.75 cm] (27) at (8.25, 0.25) {$g^*$};
		\node [style=none] (28) at (7.75, -0.5) {};
		\node [style=none] (29) at (8.25, 1) {};
		\node [style=square box, minimum width=1 cm, minimum height=0.75 cm] (30) at (12.75, 0.25) {$g^\dagger$};
		\node [style=none] (31) at (12.75, 1) {};
		\node [style=none] (32) at (12.25, -0.5) {};
		\node [style=none] (33) at (8.75, -0.5) {};
		\node [style=none] (34) at (9.75, -0.5) {};
		\node [style=none] (35) at (9.75, 1.75) {};
		\node [style=none] (36) at (11.25, 1.75) {};
		\node [style=none] (37) at (10.25, -1.5) {};
		\node [style=none] (38) at (10.25, 1.5) {};
		\node [style=none] (39) at (7, -1.5) {};
		\node [style=none] (40) at (7, 1.5) {};
		\node [style=none] (41) at (14, 1.5) {};
		\node [style=none] (42) at (10.75, 1.5) {};
		\node [style=none] (43) at (14, -1.5) {};
		\node [style=none] (44) at (10.75, -1.5) {};
		\node [style=none] (45) at (12.25, -2) {$h$};
		\node [style=none] (46) at (9, -2) {$h_*$};
	\end{pgfonlayer}
	\begin{pgfonlayer}{edgelayer}
		\draw [style=diredge] (1) to (2);
		\draw [style=diredge, bend right] (4.center) to (2);
		\draw [style=diredge, bend right] (2) to (3.center);
		\draw [style=diredge, bend right] (5) to (7.center);
		\draw [style=diredge, bend right] (6.center) to (5);
		\draw [style=diredge] (5) to (1);
		\draw [style=diredge] (12.center) to (9.center);
		\draw [style=diredge] (13.center) to (15.center);
		\draw [style=diredge] (10.center) to (11.center);
		\draw [style=diredge] (16.center) to (14.center);
		\draw [style=diredge, in=-90, out=-90, looseness=1.50] (19.center) to (18.center);
		\draw [style=diredge] (23.center) to (22.center);
		\draw [style=diredge] (31.center) to (25.center);
		\draw [style=diredge] (28.center) to (21.center);
		\draw [style=diredge] (26.center) to (29.center);
		\draw [style=diredge, in=-90, out=-90, looseness=1.50] (20.center) to (32.center);
		\draw [in=-90, out=-90, looseness=1.50] (33.center) to (34.center);
		\draw (34.center) to (35.center);
		\draw [in=90, out=90, looseness=2.00] (35.center) to (36.center);
		\draw (36.center) to (20.center);
		\draw [style=dashed edge] (40.center) to (38.center);
		\draw [style=dashed edge] (38.center) to (37.center);
		\draw [style=dashed edge] (37.center) to (39.center);
		\draw [style=dashed edge] (39.center) to (40.center);
		\draw [style=dashed edge] (42.center) to (41.center);
		\draw [style=dashed edge] (41.center) to (43.center);
		\draw [style=dashed edge] (43.center) to (44.center);
		\draw [style=dashed edge] (44.center) to (42.center);
	\end{pgfonlayer}
\end{tikzpicture}}
\endpgfgraphicnamed.\]
  Since the coherence isomorphisms of $\CPs[\V]$ are those of $\V$,
  they are unitary, and thus $\CPs[\V]$ is a dagger symmetric monoidal category.

  Finally, for compactness, let $(A, \whitemult)$ be an object in
  $\CPs[\V]$. Let $A^*$ be a dual of $A$, with cap
  $\varepsilon_{A^*} \colon A^* \otimes A \to I$.
  If a Frobenius algebra is dagger normalisable, so too are the opposite algebra
  and the transposed algebra (\ie the dual).
  Thus $(A^*,\whitedualmult)$ is a well-defined
  object of $\CPs[\V]$. Now, $ \varepsilon_{A^*} \colon (A, \whitemult) \otimes
  (A^*, \whitedualmult) \rightarrow I $ 
  satisfies the CP*--condition, again by Lemma~\ref{lem:actions}:
  \ctikzfig{cpstar_cap}
  We have already showed that the dagger of a map satisfying the CP*--condition
  also satisfies the CP*--condition, so finally let $\eta_{A} = \varepsilon_{A^*}^\dagger$.
  We complete the proof by noting that $\sigma$, $\eta$, and $\varepsilon$ are
  all defined with the same underlying maps as in \V, so the symmetry and snake equations
  are automatically satisfied.
\end{proof}

We have constructed a category whose objects are abstract C*-algebras,
and we claim that the morphisms are abstract completely positive
maps. The following proposition justifies that claim, by showing
that maps satisfying the CP*--condition correspond exactly to
maps that are completely positive
in the usual sense, in that $f : A \to B$ applied to a positive
(open) state preserves positivity.

\begin{proposition}\label{prop:completepositivity}
  Let $(A,\whitemult)$ and $(B, \graymult)$ be normalisable dagger
  Frobenius algebras and $f \colon A \to B$ a morphism in a dagger
  compact category. The following are equivalent:
  \begin{enumerate}[(a)]
  \item $f$ satisfies the CP*--condition~\eqref{eq:cpstar};
  \item $f \otimes \id[C]$ sends positive elements
    of $(A,\whitemult) \otimes (C, \blackmult)$ to positive elements
    of $(B, \graymult) \otimes (C, \blackmult)$ for all normalisable
    dagger Frobenius algebras $(C, \blackmult)$;
  \item $f \otimes \id[X^* \otimes X]$ sends positive elements
    of $(A,\whitemult) \otimes (X^* \otimes X,\pantsalg)$ to positive
    elements of $(B, \graymult) \otimes (X^* \otimes X,\pantsalg)$ for 
    all objects $X$.
  \end{enumerate}
\end{proposition}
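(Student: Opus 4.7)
The plan is to prove the cycle (a) $\Rightarrow$ (b) $\Rightarrow$ (c) $\Rightarrow$ (a), with only the last implication requiring real work.

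For (a) $\Rightarrow$ (b): the identity map on any normalisable dagger Frobenius algebra $(C,\blackmult)$ satisfies the CP*-condition by Lemma~\ref{lem:actions}, with Kraus map its comultiplication. Theorem~\ref{thm:cpstar} has already established that the CP*-condition is preserved under tensor product and composition, so $f \otimes \id[C]$ satisfies it, and precomposing with any positive element $p \colon I \to A \otimes C$ (which is a CP*-morphism, by the very definition of positivity) yields a CP*-morphism $I \to B \otimes C$, hence a positive element. For (b) $\Rightarrow$ (c): this is immediate, since each abstract matrix algebra $(X^* \otimes X, \pantsalg)$ is a normalisable dagger Frobenius algebra by Proposition~\ref{prop:abstractmatrixalgebras}, so (c) is merely the restriction of (b) to such objects.

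The main obstacle is (c) $\Rightarrow$ (a), where a Kraus map for $f$ must be extracted from the hypothesis that $f$ preserves positivity after tensoring with matrix algebras. The approach mirrors the Choi--Jamio\l{}kowski correspondence: I set $X = A$ and apply the hypothesis to a canonical positive element $\rho \colon I \to A \otimes (A^* \otimes A)$ of $(A,\whitemult) \otimes \B(A)$ that plays the role of the abstract Choi state for the identity on $A$. A natural candidate is obtained from $\whiteunit$ by applying $\whitecomult$ and then bending one of the output wires down through the cup $\eta_A$, producing an element that is manifestly positive in the combined algebra by two applications of Lemma~\ref{lem:actions}. By hypothesis, $(f \otimes \id[{\B(A)}]) \circ \rho$ is then a positive element of $B \otimes \B(A)$, which by the alternative form of the CP*-condition given in equation~\eqref{eq:cpstarconvolution} admits a Kraus decomposition with some auxiliary object $Y$ and morphism $h$. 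The Kraus data for $(f \otimes \id)\circ\rho$ can then be reshaped into Kraus data for $f$ itself by absorbing the $A^*$ wire (which came from $\B(A) = A^* \otimes A$) into the Kraus object and straightening the remaining wires with the snake equations of equation~\eqref{eq:frob-snake-ids}.

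The principal difficulty is twofold. First, identifying the correct canonical state $\rho$ so that $(f \otimes \id[{\B(A)}]) \circ \rho$ genuinely encodes $f$ up to a controllable reshaping: in $\FHilb$ this is the standard observation that the Choi matrix determines the channel, but abstractly the argument must proceed purely diagrammatically, using only the Frobenius and compact structure. Second, verifying that the reshaping back produces a Kraus map satisfying precisely the form~\eqref{eq:cpstar}, rather than some strictly weaker factorisation; here the normalisability of $(A,\whitemult)$ is expected to be essential, supplying the normaliser $\whitenorm$ needed to absorb the extra loop that arises when inverting the Choi construction, via Lemmas~\ref{lem:actions} and~\ref{lem:normalisability}.
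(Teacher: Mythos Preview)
Your strategy is the same as the paper's: the cycle (a) $\Rightarrow$ (b) $\Rightarrow$ (c) $\Rightarrow$ (a), with (a) $\Rightarrow$ (b) handled by Theorem~\ref{thm:cpstar}, (b) $\Rightarrow$ (c) trivial, and (c) $\Rightarrow$ (a) via a Choi-type state on $A$ tensored with an abstract matrix algebra built from $A$ itself. The only differences are cosmetic. The paper takes the ancilla to be $X = A^*$ rather than $X = A$, and its positive state $\rho$ is displayed so that, after applying $f \otimes \id$ and capping off the matrix-algebra wires with the Frobenius cap of $(A,\whitemult)$, one lands \emph{directly} on the left-hand side of~\eqref{eq:cpstar}; the Kraus witness for $(f \otimes \id)\circ\rho$ becomes the Kraus map $g$ for $f$ with no residual loop to absorb. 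So the normalisability-of-$A$ complication you anticipate in the final step does not actually arise if you set things up this way, and you need not pass through the convolution form~\eqref{eq:cpstarconvolution}.
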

\begin{proof}
  For (a) $\Rightarrow$ (b): if $\rho$ is a positive element of
  $(A,\whitemult) \otimes (C,\blackmult)$, then it can be regarded as
  a morphism $\rho \colon I \to (A,\whitemult) \otimes (C,\blackmult)$
  in $\CPs[\V]$. It then follows from
  Theorem~\ref{thm:cpstar} that $(f \otimes \id[C]) \circ \rho$ is
  also a morphism in $\CPs[\V]$, so it must also be a positive element.
  The implication (b) $\Rightarrow$ (c) is trivial. 
  Finally, for (c) $\Rightarrow$ (a): setting
  $X=A^*$, the following is a positive element of
  $(B,\blackmult) := (A,\whitemult) \otimes (X^* \otimes X,
  \pantsalg)$. 
  \ctikzfig{cpstar_alt_matrix_point}
  Indeed, graphical rewriting using the Frobenius law and symmetry
  shows:
  \ctikzfig{cpstar_alt_matrix_point_ii}
  So, by assumption, $(f \otimes \id[A^*]) \circ \rho$  is also a positive
  element. Applying white caps to both sides establishes that $f$
  satisfies the CP*--condition.
  \[ %
\beginpgfgraphicnamed{cpstar_alt_pf}
\begin{tikzpicture}[dotpic]
	\begin{pgfonlayer}{nodelayer}
		\node [style=none] (0) at (-7.5, -0) {$=$};
		\node [style=square box] (1) at (-10.25, -0.5) {$f$};
		\node [style=gray dot] (2) at (-10.25, 1.5) {};
		\node [style=none] (3) at (-9.5, 2.5) {};
		\node [style=none] (4) at (-11, 2.5) {};
		\node [style=white dot] (5) at (-9, -0) {};
		\node [style=none] (6) at (-11.75, 1.75) {};
		\node [style=none] (7) at (-8.75, 2.5) {};
		\node [style=square box, minimum width=1 cm, minimum height=0.75 cm] (8) at (-5.5, -0.25) {$g_*$};
		\node [style=none] (9) at (-2.75, 0.5) {};
		\node [style=none] (10) at (-6.25, 0.5) {};
		\node [style=none] (11) at (-2.75, 1.75) {};
		\node [style=none] (12) at (-6.25, 1.75) {};
		\node [style=square box, minimum width=1 cm, minimum height=0.75 cm] (13) at (-3, -0.25) {$g$};
		\node [style=none] (14) at (-5, 0.5) {};
		\node [style=none] (15) at (-3.5, 0.5) {};
		\node [style=none] (16) at (-10.25, -1.5) {};
		\node [style=none] (17) at (-9, -1.5) {};
		\node [style=none] (18) at (-11.75, 2.5) {};
		\node [style=none] (19) at (-2.25, 0.5) {};
		\node [style=none] (20) at (-2.25, 1.75) {};
		\node [style=none] (21) at (-5.75, 1.75) {};
		\node [style=none] (22) at (-5.75, 0.5) {};
		\node [style=none] (23) at (0, -0) {$\Rightarrow$};
		\node [style=none] (24) at (4.25, 2) {};
		\node [style=none] (25) at (11, 0.5) {};
		\node [style=none] (26) at (4.25, -2) {};
		\node [style=none] (27) at (2.75, -2) {};
		\node [style=square box, minimum width=1 cm, minimum height=0.75 cm] (28) at (8, -0.25) {$g_*$};
		\node [style=square box, minimum width=1 cm, minimum height=0.75 cm] (29) at (10.5, -0.25) {$g$};
		\node [style=white dot] (30) at (3.5, -1) {};
		\node [style=white dot] (31) at (7, 1) {};
		\node [style=white dot] (32) at (11.5, 1) {};
		\node [style=none] (33) at (10, 0.5) {};
		\node [style=none] (34) at (2.75, 2) {};
		\node [style=none] (35) at (8.5, 0.5) {};
		\node [style=none] (36) at (5.5, -0) {$=$};
		\node [style=gray dot] (37) at (3.5, 1) {};
		\node [style=none] (38) at (7.5, 0.5) {};
		\node [style=none] (39) at (10.5, 2.25) {};
		\node [style=square box] (40) at (3.5, -0) {$f$};
		\node [style=none] (41) at (8, 2.25) {};
		\node [style=none] (42) at (10.5, 0.5) {};
		\node [style=none] (43) at (8, 0.5) {};
		\node [style=none] (44) at (6.5, 0.5) {};
		\node [style=none] (45) at (12, 0.5) {};
		\node [style=none] (46) at (6.5, -2) {};
		\node [style=none] (47) at (12, -2) {};
	\end{pgfonlayer}
	\begin{pgfonlayer}{edgelayer}
		\draw [style=diredge] (1) to (2);
		\draw [style=diredge, in=143, out=-90] (4.center) to (2);
		\draw [style=diredge, in=-90, out=37] (2) to (3.center);
		\draw [style=diredge, in=-90, out=150] (5) to (7.center);
		\draw [style=diredge, in=60, out=-90] (6.center) to (5);
		\draw [style=diredge] (9.center) to (11.center);
		\draw [style=diredge] (12.center) to (10.center);
		\draw [style=diredge, in=90, out=90, looseness=1.50] (15.center) to (14.center);
		\draw [style=diredge] (16.center) to (1);
		\draw [in=-90, out=-90, looseness=1.25] (17.center) to (16.center);
		\draw (5) to (17.center);
		\draw (18.center) to (6.center);
		\draw [style=diredge] (19.center) to (20.center);
		\draw [style=diredge] (21.center) to (22.center);
		\draw [style=diredge] (40) to (37);
		\draw [style=diredge, bend right] (34.center) to (37);
		\draw [style=diredge, bend right] (37) to (24.center);
		\draw [style=diredge, bend right] (30) to (27.center);
		\draw [style=diredge, bend right] (26.center) to (30);
		\draw [style=diredge] (30) to (40);
		\draw [style=diredge] (42.center) to (39.center);
		\draw [style=diredge, in=-165, out=90] (25.center) to (32);
		\draw [style=diredge] (41.center) to (43.center);
		\draw [style=diredge, in=90, out=-15] (31) to (38.center);
		\draw [style=diredge, in=90, out=90, looseness=1.50] (33.center) to (35.center);
		\draw [in=90, out=-165] (31) to (44.center);
		\draw [style=diredge, in=-15, out=90] (45.center) to (32);
		\draw [style=diredge] (44.center) to (46.center);
		\draw (47.center) to (45.center);
	\end{pgfonlayer}
\end{tikzpicture}}
\endpgfgraphicnamed \]
  This finishes the proof.
\end{proof}

The previous proposition is a fully abstract version of
Stinespring's dilation theorem~\cite{stinespring}, or rather (because
our abstract C*-algebras are finite-dimensional) of Choi's
theorem~\cite{choi}. The morphism $g$ in equation~\eqref{eq:cpstar}
therefore called a \emph{Kraus map} for $f$; we emphasise that it is
not unique. Traditional formulations in $\FHilb$ allow a sum of Kraus
maps; this is expressed abstractly by the indexing object $X$
in~\eqref{eq:cpstar}. 

The abstract C*-algebras $(C,\blackmult)$ and $(X^* \otimes X,
\pantsalg)$ in the previous proposition are called the ancillary
system, or \emph{ancilla}. 
In these terms, the previous proposition shows that the
CP*--condition~\eqref{eq:cpstar} characterises those maps that
preserve positivity even when their input and output systems are
regarded as open subsystems of larger systems.
In fact, the previous proposition does slightly better
than Choi's theorem, because the ancilla can be an arbitrary
abstract C*-algebra instead of just an abstract matrix algebra. 

Because of the way we have modeled the definition of $\CPs[\V]$ after
the case of $\FHilb$, the category $\CPs[\FHilb]$ is indeed that of
(concrete) finite-dimensional C*-algebras and completely positive maps, as the
following proposition records.

\begin{proposition}\label{prop:cpstarfhilb}
  $\CPs[\FHilb]$ is equivalent to the category of finite-dimensional
  C*-algebras and completely positive maps. 
\end{proposition}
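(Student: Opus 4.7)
The plan is to construct an equivalence functor $F \colon \CPs[\FHilb] \to \mathbf{D}$, where $\mathbf{D}$ denotes the category of finite-dimensional C*-algebras and completely positive maps. On objects, $F$ sends a normalisable dagger Frobenius algebra $(A,\whitemult,\whiteunit,\whitenorm)$ to the C*-algebra with the same underlying vector space, multiplication, and unit, equipped with the involution of Theorem~\ref{thm:algebrasFHilb}. On morphisms, $F$ acts as the identity on the underlying linear maps. Essential surjectivity is then immediate from Theorem~\ref{thm:algebrasFHilb}, since every finite-dimensional C*-algebra arises, up to isomorphism, as a normalisable dagger Frobenius algebra in $\FHilb$; faithfulness is tautological.

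The substantive work is verifying that $F$ is well defined and full, both of which follow from Proposition~\ref{prop:completepositivity} once the abstract and concrete notions of positivity are aligned. An abstract positive element $a = \whitemult(b^\star \otimes b) \colon I \to A$ is precisely a positive element of the C*-algebra $A$, because the abstract involution $\star$ of Theorem~\ref{thm:algebrasFHilb} is by construction the C*-algebraic adjoint. Since tensor products of normalisable dagger Frobenius algebras in $\FHilb$ are tensor products of C*-algebras (Lemma~\ref{lem:tensorproducts}), the same identification holds for positive elements of tensor products. Moreover, each abstract matrix algebra $(\C^n)^* \otimes \C^n$, with the canonical multiplication of Proposition~\ref{prop:abstractmatrixalgebras}, coincides with the concrete matrix algebra $\M_n$. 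Applying the equivalence (a)~$\iff$~(c) of Proposition~\ref{prop:completepositivity} with ancillas ranging over these matrix algebras then shows that $f$ satisfies the CP*-condition if and only if $f \otimes \id[\M_n]$ preserves positive elements for every $n$, which is the classical definition of complete positivity.

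The principal obstacle is the bookkeeping required to check that the abstract involution of Theorem~\ref{thm:algebrasFHilb} agrees with the standard C*-algebra adjoint on each summand $\M_{n_k}$ of a decomposition $A \cong \bigoplus_k \M_{n_k}$, and that positive definiteness of the normaliser $\whitenorm$ does not interfere with positivity. Once these identifications are in place, functoriality of $F$ is immediate, and the argument above delivers the equivalence.
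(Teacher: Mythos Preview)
Your proposal is correct and follows essentially the same route as the paper: define the functor on objects via Theorem~\ref{thm:algebrasFHilb} and as the identity on morphisms, deduce essential surjectivity from that theorem, and use Proposition~\ref{prop:completepositivity} to get well-definedness and fullness, with faithfulness immediate. The only minor caveat is that the paper's official notion of abstract positive element is the \emph{weaker} form (with an ancillary object $X$), not $\whitemult(b^\star \otimes b)$; these coincide in $\FHilb$, as the paper notes, so your identification goes through, but you should be explicit about which definition is in play when invoking condition~(c).
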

\begin{proof}
  Define a functor $E$ from $\CPs[\FHilb]$ to the category of
  finite-dimensional C*-algebras and completely positive maps, acting
  on objects as in Theorem~\ref{thm:algebrasFHilb} and as the identity on
  morphisms. This functor is then essentially surjective on objects by
  that theorem.
  Furthermore, Proposition~\ref{prop:completepositivity} shows that
  $E(f)$ is a completely positive map between concrete C*-algebras if
  and only if $f$ satisfies the CP*--condition. This makes $E$ a
  well-defined functor that is full.
  It is faithful by construction, and hence it is an equivalence of
  categories.
\end{proof}

\begin{remark}\label{rem:real}
  We have employed complex Hilbert spaces. It is natural to
  wonder about performing the CP*--construction on real
  finite-dimensional Hilbert spaces.  

  On the level of objects, Theorem~\ref{thm:algebrasFHilb} still goes
  through: abstract C*-algebras in the category of real
  finite-dimensional Hilbert spaces correspond to so-called
  finite-dimensional \emph{real C*-algebras}
  (see~\cite{li:realoperatoralgebras}). However, these need not be
  direct sums of complex matrix algebras; rather, they are 
  direct sums of algebras of matrices over the real numbers, complex
  numbers, or over the
  quaternions~\cite[Theorem~5.7.1]{li:realoperatoralgebras}.  

  On the level of morphisms, Proposition~\ref{prop:completepositivity}
  still holds. However, in the real case these morphisms do not give all
  completely positive
  maps~\cite[Theorem~4.3]{ruan:realoperatorspaces}. The underlying
  issue is that there are more positive elements in real C*-algebras
  than those of the form $a^*a$.
\end{remark}

In the concrete case, *-homomorphisms between C*-algebras are
automatically completely positive. We conclude this section by proving
this holds fully abstractly, providing an easy way to show that some
maps are morphisms in $\CPs[\V]$. 

\begin{definition}\label{def:starhomomorphism}
  If $(A, \whitemult)$ and $(B, \graymult)$ are dagger normalisable
  Frobenius algebras, a morphism $f \colon A \to B$ is
  called a \emph{*-homomorphism} when it satisfies the following equations.
  \ctikzfig{starhomomorphism}
\end{definition}

\begin{lemma}\label{lem:starhomomorphism}
  Let $(A,\whitemult)$ and $(B,\graymult)$ be dagger normalisable
  Frobenius algebras in a dagger compact category $\V$. If $f \colon A
  \to B$ is a *-homomorphism, then it is a well-defined morphism in $\CPs[\V]$.
\end{lemma}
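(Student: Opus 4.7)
My plan is to exhibit a Kraus witness for $f$ directly, with trivial ancilla: take $X = I$ (the monoidal unit) and $g = f$ itself in the CP*--condition~\eqref{eq:cpstar}. The intuition is that a *-homomorphism is the categorical analogue of a unital $*$-embedding, whose Stinespring dilation has multiplicity one; in $\FHilb$ this would produce Kraus rank one up to the usual block decomposition, and abstractly the same ansatz should succeed because the three *-homomorphism axioms directly translate the $A$-side Frobenius structure into the $B$-side one.

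To verify~\eqref{eq:cpstar} with this choice, I would first pass to the equivalent convolution form~\eqref{eq:cpstarconvolution}, which is more convenient because both sides present themselves as ``comultiply, double $f$, multiply''. Substituting $h = f$, the right-hand side becomes a diagram in which $\whitecomult$ is followed by two copies of $f$ (one conjugated via the star on $A$) and then by $\graymult$. Star-preservation from Definition~\ref{def:starhomomorphism} lets me push the conjugation past $f$, replacing $f_* = (f^*)^\dagger$ on the $A$-side by the corresponding conjugation on the $B$-side acting after $f$. Multiplicativity of $f$, together with the dagger of the multiplicativity axiom, then intertwines $(f \otimes f) \circ \whitecomult$ with a $B$-side cocomposition post-applied to $f$; substituting collapses the diagram to a $B$-loop post-composed with $f$. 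Finally, Lemma~\ref{lem:actions} applied to $(B, \graymult)$---which is symmetric by Proposition~\ref{prop:normalisableimpliessymmetric}---identifies that loop with the identity wire, leaving precisely $f$, as required.

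The main obstacle I expect is the bookkeeping of normalisers and stars. The star operation used in Definition~\ref{def:starhomomorphism} is defined via the dualiser construction of Theorem~\ref{thm:algebrasFHilb}, which bakes in the normaliser $\whitenorm$; likewise the CP*--condition~\eqref{eq:cpstar} and its convolution form~\eqref{eq:cpstarconvolution} implicitly involve the normalisers through Lemma~\ref{lem:normalisability}. I therefore need to check that star-preservation by $f$ genuinely transports the $A$-normaliser to the $B$-normaliser, or at least that the two normaliser scalars cancel consistently on the two sides of the equation. Centrality and positive-definiteness of the normalisers, combined with their uniqueness noted after Definition~\ref{def:normalisable}, should reduce this to a routine but careful application of the Frobenius law rather than a substantive new difficulty.
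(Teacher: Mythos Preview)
Your approach differs from the paper's and contains a genuine gap at the final step. Lemma~\ref{lem:actions} does not assert that any ``$B$-loop'' collapses to the identity wire; it is a graphical identity that rewrites $\grayaction\circ\graycoaction$ in a manifestly positive form whose Kraus map is the comultiplication, with ancilla $B$. This is precisely why the proof of Theorem~\ref{thm:cpstar} shows that even $\id[A]$ satisfies~\eqref{eq:cpstar} with $g=\whitecomult$ and $X=A$, not with trivial ancilla. So after your reduction you would be left not with $f$ but with the residual factor $\grayaction\circ\graycoaction$ applied to (a doubled version of) $f$, and that factor is the identity only when $(B,\graymult)$ is normal, not merely normalisable. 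The ``obstacle'' you flag about normalisers is exactly this obstruction, and it is not routine bookkeeping: none of the three axioms in Definition~\ref{def:starhomomorphism} involves the normaliser, so a *-homomorphism need not intertwine $\whitenorm$ with $\graynorm$---for instance the unital embedding $\M_m\hookrightarrow\M_n$ with $m\neq n$ does not. Your side remark that the star operation of Theorem~\ref{thm:algebrasFHilb} bakes in the normaliser is also incorrect; the dualiser is built from the Frobenius cap and cup alone.

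The paper sidesteps all of this by not seeking a single explicit Kraus witness. It uses the *-homomorphism axioms graphically to rewrite the left-hand side of~\eqref{eq:cpstar} as the composite of $f^{*}\otimes f$ with $\grayaction\circ\graycoaction$, observes that each factor separately already has the required positive form (the first by construction, the second by Lemma~\ref{lem:actions}), and then invokes closure under composition from Theorem~\ref{thm:cpstar}. The Kraus witness one extracts from this argument therefore carries a nontrivial ancilla inherited from the second factor, which is exactly what your trivial-ancilla ansatz tries to avoid and cannot.
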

\begin{proof}
  Graphical manipulation shows the following.
  \ctikzfig{starhomoiscp}
  Hence this morphism is a composition of $f^* \otimes f$ and
  $\grayaction \circ \graycoaction$. 
  Both morphisms are completely positive, \ie of the form of the
  right-hand side of equation~\eqref{eq:cpstar}: the former by
  construction, the latter by Lemma~\ref{lem:actions}.
  Therefore $f$ is also completely positive by
  Theorem~\ref{thm:cpstar}, and hence a morphism in $\CPs[\V]$.
\end{proof}

\section{Completely classical systems and completely quantum systems}\label{sec:classicalquantum}

As discussed in Section~\ref{sec:abstractcstaralgebras}, commutative
abstract C*-algebras $(A,\whitemult)$ in $\FHilb$ correspond to
orthogonal bases of $A$. More precisely, the basis vectors are the
\emph{copyable points}, \ie morphisms $p \colon I \to A$ that
satisfy $\whitecomult \circ p = p \otimes p$. 
Expanding arbitrary vectors in this basis, one can show
that the normalised positive elements of $A$ are precisely those 
vectors with positive coefficients summing to $1$. 
Thus, normalised positive elements of a commutative abstract
C*-algebra may be regarded as probability distributions over its
copyable points.  
That is, we may think of commutative abstract C*-algebras as
``completely classical'' systems.\footnote{Commutativity might be
too strong a notion of ``completely classical'' system in the
abstract. A weaker notion of broadcastability, that coincides with
commutativity in $\FHilb$, seems more reasonable. Subsequent work will
investigate such more operational notions of classicality.} 

On the other hand, Section~\ref{sec:abstractcstaralgebras} showed that
abstract matrix algebras can be regarded as ``completely quantum''
systems: their states have no probabilistical mixing aspect at
all. 
In general, abstract C*-algebras are combinations of ``completely
classical'' and ``completely quantum'' parts. 
This section focuses on these two extreme cases. It proves that the
CP*--construction subsumes earlier constructions that remained
separate: 
the Stoch--construction into its ``completely classical''
part~\cite{coeckepaquettepavlovic:structuralism},
and the so-called CPM--construction into its ``completely
quantum''
part~\cite{selinger:completelypositive,coeckeheunen:cp,boixoheunen:sequential}. 
Thus the CP*--construction combines the two, and places classical and
quantum systems and channels on an equal footing in a single category.

\subsection{Completely classical systems}\label{subsec:classical}

First, recall the
Stoch--construction~\cite{coeckepaquettepavlovic:structuralism}.
Like the CP*--construction of the previous section, it turns a dagger
compact category $\V$ into a new one, $\Stoch[\V]$. It will turn out
that it is precisely the full subcategory of $\CPs[\V]$ consisting of
commutative abstract C*-algebras, and that we may regard it as the
subcategory of classical channels.

Objects of $\Stoch[\V]$ are commutative normalisable dagger Frobenius
algebras. Morphisms $(A,\whitemult) \to (B,\blackmult)$ in
$\Stoch[\V]$ are morphisms $f \colon A \to B$ in $\V$ with
\begin{equation}\label{eq:stoch}
\beginpgfgraphicnamed{stoch_condition}
\begin{tikzpicture}[dotpic]
	\begin{pgfonlayer}{nodelayer}
		\node [style=none] (0) at (0, -0) {$=$};
		\node [style=none] (1) at (-1.25, 2) {};
		\node [style=none] (2) at (-2.75, 2) {};
		\node [style=none] (3) at (-1.25, -2) {};
		\node [style=none] (4) at (-2.75, -2) {};
		\node [style=square box, minimum width=1 cm] (5) at (2, -0) {$g_*$};
		\node [style=none] (6) at (4.5, -0.5) {};
		\node [style=none] (7) at (2, -0.5) {};
		\node [style=none] (8) at (2, -1.75) {};
		\node [style=none] (9) at (4.5, -1.75) {};
		\node [style=none] (10) at (5, 0.5) {};
		\node [style=none] (11) at (2.5, 0.5) {};
		\node [style=none] (12) at (5, 2) {};
		\node [style=none] (13) at (1.5, 2) {};
		\node [style=square box, minimum width=1 cm] (14) at (4.5, -0) {$g$};
		\node [style=none] (15) at (1.5, 0.5) {};
		\node [style=none] (16) at (4, 0.5) {};
		\node [style=square box] (17) at (-2, 0) {$f$};
		\node [style=gray dot] (18) at (3.25, 1.25) {};
		\node [style=white dot] (19) at (-2, -1) {};
		\node [style=black dot] (20) at (-2, 1) {};
	\end{pgfonlayer}
	\begin{pgfonlayer}{edgelayer}
		\draw [style=diredge] (9.center) to (6.center);
		\draw [style=diredge] (10.center) to (12.center);
		\draw [style=diredge] (8.center) to (7.center);
		\draw [style=diredge, in=0, out=90, looseness=1.25] (16.center) to (18);
		\draw [style=diredge, in=180, out=90] (11.center) to (18);
		\draw [style=diredge, in=-135, out=90] (4.center) to (19);
		\draw [style=diredge, in=-45, out=90] (3.center) to (19);
		\draw [style=diredge] (19) to (17);
		\draw [style=diredge] (17) to (20);
		\draw [style=diredge, in=-90, out=45] (20) to (1.center);
		\draw [style=diredge, in=-90, out=135] (20) to (2.center);
		\draw [style=diredge] (15.center) to (13.center);
	\end{pgfonlayer}
\end{tikzpicture}}
\endpgfgraphicnamed
\end{equation}
for some commutative normalisable dagger Frobenius algebra
$(X,\graymult)$ and a morphism $g \colon A \to X \otimes B$ in
$\V$. Here, the conjugation $g_*$ is taken with respect to the caps
and cups induced by $\whitecap$, $\graycap$, and $\blackcap$.

\begin{theorem}\label{thm:stoch}
  For a dagger compact category $\V$, the category $\Stoch[\V]$ is
  isomorphic to the full subcategory of $\CPs[\V]$ consisting of all
  commutative normalisable dagger Frobenius algebras. 
\end{theorem}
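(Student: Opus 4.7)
The plan is as follows. Since both $\Stoch[\V]$ and the claimed full subcategory of $\CPs[\V]$ have the same class of objects (commutative normalisable dagger Frobenius algebras in $\V$), the candidate isomorphism is the identity on objects. It therefore suffices to verify that the Stoch condition \eqref{eq:stoch} and the CP*-condition \eqref{eq:cpstar} cut out the same class of morphisms whenever both the source $(A,\whitemult)$ and target $(B,\blackmult)$ are commutative.

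For the direction Stoch $\Rightarrow$ CP*, I would exploit the canonical dualizer $d_X \colon X \to X^*$ induced by any dagger Frobenius algebra structure on $X$ (the isomorphism displayed just before Theorem~\ref{thm:algebrasFHilb}). The Frobenius cap $\graycap$ of $(X,\graymult)$ factors through the compact cap as $\graycap = \varepsilon_{X^*} \circ (d_X \otimes 1_X)$, and analogously for the Frobenius cup. Starting from a Stoch decomposition $(X, g)$ of $f$, I would absorb $d_X$ into the Kraus map, defining $g' := (d_X \otimes 1_B) \circ g \colon A \to X^* \otimes B$. A short graphical calculation using the snake identities of Definition~\ref{def:dual} then shows that $f$ satisfies the CP*-condition \eqref{eq:cpstar} with ancilla $X^*$ and Kraus map $g'$.

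The implication CP* $\Rightarrow$ Stoch is the principal obstacle: given a CP*-Kraus map $g \colon A \to X \otimes B$ with $X$ an arbitrary object, I must exhibit a Kraus decomposition of $f$ whose ancilla carries a commutative Frobenius algebra structure. The natural choice is to take the ancilla to be $A$ itself, equipped with its given commutative normalisable dagger Frobenius structure, and to build a new Kraus map $g' \colon A \to A \otimes B$ from $g$ by combining it with the comultiplication of $A$, a compact cap on $X$, and the normaliser of $A$. The verification that $(A,\whitemult, g')$ satisfies \eqref{eq:stoch} rests on rewriting the CP*-diagram using the spider/Frobenius moves available for the commutative algebras $A$ and $B$, together with centrality of the normaliser and Lemmas~\ref{lem:actions} and~\ref{lem:normalisability}. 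The intuition is that, with both endpoints commutative, the Kraus pairings interact with $X$ only through caps and cups, so the ``quantum'' ancilla $X$ can be traced out and replaced by the ``classical shadow'' $A$ without altering $f$.

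I expect the graphical manipulation for the hard direction to be the delicate step, since one must explicitly exhibit the commutative ancilla in terms of the given data and check that the absorbed ancilla $X$ genuinely drops out. Once both directions are established, functoriality of the identity-on-objects correspondence is automatic, because composition and identities in $\Stoch[\V]$ and $\CPs[\V]$ are both inherited from $\V$ and therefore coincide on the shared morphisms, completing the proof that the two categories are isomorphic.
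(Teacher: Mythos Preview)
Your forward direction (Stoch $\Rightarrow$ CP*) matches the paper: one simply absorbs the Frobenius dualisers into the Kraus map, converting the Frobenius-based presentation of~\eqref{eq:stoch} into the compact-based presentation of~\eqref{eq:cpstar}.

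For the converse you take a substantially more complicated route than the paper, and the route you sketch does not work as stated. The paper's argument is a one-liner: the graphical rewrite establishing Stoch $\Rightarrow$ CP* consists only of composing with the dualisers $d_A$, $d_X$, $d_B$, which are isomorphisms, so one runs the same calculation backwards with the \emph{same} ancilla $X$. There is no need to manufacture a new ancilla. The point you appear to miss is that the Frobenius structure on the ancilla is inessential: when the Stoch right-hand side is expanded, the copy of $d_X$ coming from the Frobenius conjugate $g_*$ cancels against the copy coming from the Frobenius cap $\graycap$, so the Stoch and CP* decompositions with a given $(X,g)$ differ only by the dualisers of $A$ and $B$.

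By contrast, your proposal to replace the arbitrary ancilla $X$ by $A$ has a genuine gap. First, the construction is not well-formed: a ``compact cap on $X$'' requires an $X^*$-leg, which a single copy of $g\colon A\to X\otimes B$ does not provide; if you mean to pair $g$ with $g_*$ or $g^\dag$ to supply it, then the resulting $g'$ already contains two copies of $g$, and the Stoch right-hand side built from $g'$ would contain four, which cannot agree with the CP* right-hand side. Second, and more fundamentally, there is no general mechanism in a dagger compact category for trading a positive factorisation through an arbitrary object $X$ for one through a prescribed object such as $A$---this would amount to a square-root or rank argument that is simply unavailable abstractly. Your ``classical shadow'' intuition does not survive this obstruction.
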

\begin{proof}
  We show that~\eqref{eq:stoch} implies~\eqref{eq:cpstar}.
  \ctikzfig{stoch_proof}
  The converse holds since the dualisers $\whiteidualiser$,
  $\grayidualiser$, and $\blackidualiser$, are always invertible.
\end{proof}

The following corollary justifies thinking of $\Stoch[\V]$ as a
category of classical channels.
We call a morphism $f \colon (A,\whitemult)
\to (B,\graymult)$ in $\CPs[\V]$ \emph{normalised} if it preserves
counits: $\graycounit \circ f = \whitecounit$.  
Recall that a \emph{stochastic map} between finite-dimensional Hilbert
spaces is a matrix with positive real entries whose every column sums
to one. 

\begin{corollary}\label{corollary:stochfhilb}
  Normalised morphisms in $\Stoch[\FHilb]$ correspond to stochastic
  maps between finite-dimensional Hilbert spaces.
\end{corollary}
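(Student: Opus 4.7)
The plan is to chain the preceding classification results and reduce the statement to a concrete fact about matrices. First, Theorem~\ref{thm:stoch} identifies $\Stoch[\FHilb]$ with the full subcategory of $\CPs[\FHilb]$ whose objects are commutative normalisable dagger Frobenius algebras. Combining Theorem~\ref{thm:algebrasFHilb} with commutativity shows that, up to isomorphism, such algebras in $\FHilb$ are exactly the algebras $\mathbb{C}^n$: they must be direct sums $\bigoplus_k \M_{n_k}$, and commutativity forces every summand to be $\M_1 = \mathbb{C}$. By Proposition~\ref{prop:cpstarfhilb}, morphisms between these in $\CPs[\FHilb]$ coincide with completely positive maps $\mathbb{C}^n \to \mathbb{C}^m$ in the usual concrete sense.

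Next, I would invoke the classical fact that every positive linear map between commutative C*-algebras is automatically completely positive. Consequently, a morphism $f \colon \mathbb{C}^n \to \mathbb{C}^m$ in $\Stoch[\FHilb]$ is nothing more than a linear map that sends positive elements to positive elements. The positive cone of $\mathbb{C}^k$ as a C*-algebra consists of those vectors with non-negative real coordinates in the canonical basis $\{\ket{i}\}$, so this is precisely the condition that the matrix $M$ of $f$ in those bases have entries in $\mathbb{R}_{\geq 0}$.

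Finally, I would compute the counit on $(\mathbb{C}^n, \whitemult)$ explicitly. With canonical basis $\ket{i}$, multiplication $\ket{i} \otimes \ket{j} \mapsto \delta_{ij} \ket{i}$ and unit $\sum_i \ket{i}$, one finds $\whitecounit(\ket{j}) = 1$ for every $j$. The normalisation condition $\graycounit \circ f = \whitecounit$ therefore reads $\sum_i M_{ij} = 1$ for every $j$, which is exactly the column-sum-to-one condition; combined with entrywise non-negativity, this is precisely the notion of a stochastic map defined in the text, yielding the claimed bijection. The only non-routine ingredient in this plan is the implication ``positive $\Rightarrow$ completely positive'' for maps between commutative C*-algebras; this is a standard fact, provable either via the spectral decomposition of positive elements of $\mathbb{C}^n$ or by decomposing such maps into sums of evaluations at characters, so I expect no real obstacle beyond citing it.
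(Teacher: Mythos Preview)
Your proposal is correct and follows essentially the same route as the paper: invoke Theorem~\ref{thm:stoch} and Proposition~\ref{prop:cpstarfhilb} to reduce to completely positive maps between commutative finite-dimensional C*-algebras, then identify these with column-stochastic matrices. The only difference is that the paper outsources this last identification to a reference~\cite[3.2.3 and 2.1.3]{keyl:quantuminformation}, whereas you unpack it explicitly via the ``positive $\Rightarrow$ completely positive'' fact for commutative targets and a direct counit computation; this makes your argument self-contained at the cost of a few extra lines.
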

\begin{proof}
  Combine Theorem~\ref{thm:stoch}, Proposition~\ref{prop:cpstarfhilb}
  and~\cite[3.2.3 and 2.1.3]{keyl:quantuminformation}. 
\end{proof}


\subsection{Completely quantum systems}\label{subsec:quantum}

First, we recall the
CPM--construction~\cite{selinger:completelypositive,coeckeheunen:cp}.   
Like the CP*--construction of the previous section, it turns a dagger
compact category $\V$ into a new one, $\CPM[\V]$. It will turn out
that it is precisely the full subcategory of $\CPs[\V]$ consisting of
abstract matrix algebras $\B(H)=(H^* \otimes H,\pantsalg)$, that are
simply identified with $H$, and that we may regard it as the
subcategory of quantum channels. 

Objects of $\CPM[\V]$ are the same as those of $\V$, and morphisms $f
\colon A \to B$ in $\CPM[\V]$ are morphisms $f \colon A^* \otimes A \to
B^* \otimes B$ in $\V$ for which there exist an object $X$ and a
morphism $g \colon A \to X \otimes B$ satisfying:
\ctikzfig{cp_condition}
Composition, identity maps, and $\otimes$ on objects of $\CPM[\V]$ are
as in $\V$. The tensor product is defined on morphisms of $\CPM[\V]$ as
follows: 
\ctikzfig{cp_monoidal}
$\CPM[\V]$ inherits symmetry and compact structure from $\V$,
only ``doubled''. 
\ctikzfig{cp_symm_compact}

The following theorem proves that $\CPM[\V]$ embeds in $\CPs[\V]$,
preserving all structure. To formulate that embedding, recall that a
functor $F$ is dagger symmetric monoidal if it comes with 
natural unitary isomorphisms $\varphi_{A,B} \colon F(A \otimes B)
\rightarrow F(A) \otimes F(B)$ satisfying $\varphi_{I,A} =
\varphi_{A,I} = \id[A]$ and
\[
  \begin{tikzpicture}[xscale=4,yscale=1.5]
    \node (dl) at (0,0) {$F(A \otimes B) \otimes F(C)$};
    \node (dr) at (1,0) {$F(A) \otimes F(B) \otimes F(C)$};
    \node (ul) at (0,1) {$F(A \otimes B \otimes C)$};
    \node (ur) at (1,1) {$F(A) \otimes F(B \otimes C)$};
    \draw[->] (ul) to node[auto,swap] {$\varphi_{A \otimes B, C}$} (dl);
    \draw[->] (ul) to node[auto] {$\varphi_{A, B \otimes B}$} (ur);
    \draw[->] (dl) to node[auto,swap] {$\varphi_{A, B} \otimes \id[F(C)]$} (dr);
    \draw[->] (ur) to node[auto,swap] {$\id[F(A)] \otimes \varphi_{B,C}$} (dr);
  \end{tikzpicture}
  \hspace*{-3mm}
  \begin{tikzpicture}[xscale=3,yscale=1.5]
    \node (dl) at (0,0) {$F(B \otimes A)$};
    \node (dr) at (1,0) {$F(B) \otimes F(A)$};
    \node (ul) at (0,1) {$F(A \otimes B)$};
    \node (ur) at (1,1) {$F(A) \otimes F(B)$};
    \draw[->] (ul) to node[auto,swap] {$F(\sigma_{A,B})$} (dl);
    \draw[->] (ul) to node[auto] {$\varphi_{A,B}$} (ur);
    \draw[->] (dl) to node[auto,swap] {$\sigma_{B,A}$} (dr);
    \draw[->] (ur) to node[auto,swap] {$\sigma_{F(A),F(B)}$} (dr);
  \end{tikzpicture}
\]
For simplicity, we have assumed that the categories involved are
strict monoidal. 

\begin{theorem}\label{thm:cpm}
  If $\V$ is a positive-dimensional dagger compact category,
  \begin{align*}
    \B(A) = (A^* \otimes A, \pantsalg)
    \qquad
    \B(f) = f
  \end{align*}
  defines a functor $\B \colon \CPM[\V] \to \CPs[\V]$ that is full,
  faithful, and dagger symmetric monoidal. 
\end{theorem}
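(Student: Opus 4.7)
The plan is to verify the four claimed properties in turn: $\B$ is a well-defined functor, faithful, full, and dagger symmetric monoidal. The only nontrivial checks are that $\B$ sends CPM-morphisms to CP*-morphisms, that every CP*-morphism between pants algebras arises this way, and the construction of the monoidal coherence data.

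For well-definedness on objects, Proposition~\ref{prop:abstractmatrixalgebras} gives immediately that $\B(A)$ is a normalisable dagger Frobenius algebra, hence an object of $\CPs[\V]$. For a morphism $f$ in $\CPM[\V]$ with Kraus map $g \colon A \to X \otimes B$, I would unfold the CPM presentation of $f$ graphically and observe that the multiplication and comultiplication of a pants algebra consist of a cap and a cup on the inner wire; substituting these into the CP*-condition~\eqref{eq:cpstar} applied to $\B(A)$ and $\B(B)$ therefore just bends the same $g$ around differently. The map $g$, tensored with the scalar normaliser of the pants algebra produced by positive-dimensionality, is then a valid Kraus map for the CP*-condition. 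Because $\B$ is the identity on morphisms, and composition, identities, and daggers in both source and target are inherited from $\V$, functoriality, faithfulness, and dagger-preservation follow automatically.

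The main obstacle is fullness. Given a CP*-morphism $f\colon \B(A)\to \B(B)$ presented by a Kraus map $h\colon A^*\otimes A\to X\otimes B^*\otimes B$ via the convolution form~\eqref{eq:cpstarconvolution}, I need to extract a CPM Kraus map $A\to Y\otimes B$. The plan is to substitute the pants multiplication and comultiplication into~\eqref{eq:cpstarconvolution} and use the snake equations of~\eqref{eq:frob-snake-ids} to cancel the caps and cups against the adjacent (co)multiplications, thereby reducing $f$ to an expression built from a single ``unbent'' occurrence of $h$ with free wires connected via the compact structure of $\V$. Bending the $A^*$ input and $B^*$ output of this residual $h$ using cups and caps then yields the desired map $A\to X\otimes B$, and a scalar factor coming from the pants normaliser must be absorbed to recover $f$ on the nose. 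The mild subtlety here is tracking the normaliser scalars so that the resulting $g$ reproduces $f$ exactly rather than up to a constant.

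For the symmetric monoidal structure, the coherence isomorphism $\psi\colon I\to \B(I)$ is the canonical $I\cong I^*\otimes I$, and $\varphi_{A,B}\colon \B(A\otimes B)\to \B(A)\otimes \B(B)$ is built from the compact duality $(A\otimes B)^*\cong B^*\otimes A^*$ composed with the symmetry permuting $B^*\otimes A^*\otimes A\otimes B$ into $A^*\otimes A\otimes B^*\otimes B$. I would first check that $\varphi_{A,B}$ is a $*$-isomorphism between the pants algebra on $\B(A\otimes B)$ and the tensor-product algebra on $\B(A)\otimes\B(B)$ supplied by Lemma~\ref{lem:tensorproducts}; Lemma~\ref{lem:starhomomorphism} then places $\varphi_{A,B}$ and its inverse in $\CPs[\V]$, so $\varphi_{A,B}$ is unitary. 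The hexagon, unit, and naturality diagrams for $\varphi$ reduce to routine graphical identities in $\V$ involving only swaps, cups, and caps, and compatibility of $\B$ with the tensor product of morphisms follows by comparing the ``doubled'' CPM tensor product with the pants tensor product modulo $\varphi$.
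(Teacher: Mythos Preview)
Your proposal is correct and follows essentially the same route as the paper: show that the CPM and CP* conditions coincide for pants algebras by a graphical rewrite (giving well-definedness and fullness simultaneously), note that functoriality, faithfulness, and dagger-preservation are automatic since $\B$ is the identity on morphisms, and then define $\varphi_{A,B}$ as the reshuffling permutation, verify it is a *-homomorphism via Lemma~\ref{lem:starhomomorphism}, and check the coherence diagrams.

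One simplification relative to your outline: the CP*-condition~\eqref{eq:cpstar} (and its convolution form~\eqref{eq:cpstarconvolution}) involves only the (co)multiplication, not the normaliser, so there are no scalar factors to absorb in either direction; the paper's fullness argument is a single diagram deformation exactly dual to the well-definedness step, and your worry about ``tracking the normaliser scalars'' dissolves once you write it out.
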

\begin{proof}
  First of all, $\B$ is well-defined, because a morphism $f\colon A^*
  \otimes A \rightarrow B^* \otimes B$ in $\V$ determines a morphism
  $A \to B$ in $\CPM[\V]$ precisely when if it determines a morphism
  $(A^* \otimes A, \pantsalg) \to (B^* \otimes B, \pantsalg)$ in
  $\CPs[\V]$. Indeed, if $f$ is a morphism in $\CPM[\V]$, it also
  satisfies the CP*--condition:
  \ctikzfig{cp_implies_cpstar}
  Conversely, if $f$ is in $\CPs[\V]$, then
  it is also in $\CPM[\V]$:
  \ctikzfig{cpstar_implies_cp}
  
  Composition is defined identically in $\CPM[\V]$ and $\CPs[\V]$, so
  $\B$ is functorial, full, and faithful. 
  
  Define $\varphi_{A,B} \colon \B(A \otimes B) \to
  \B(A) \otimes \B(B)$ as the following ``reshuffling map''.
  \ctikzfig{reshuffle_map}
  To verify that this defines a morphism in $\CPs[\V]$, it suffices to
  show that it is a *-homomorphism by Lemma~\ref{lem:starhomomorphism}. 
  \ctikzfig{reshuffle_hm_mult}
  \ctikzfig{reshuffle_hm_star}
  Next, we show naturality of $\varphi$:
  \ctikzfig{reshuffle_natural}
  The last thing that remains to be shown is coherence for $\varphi$
  with respect to the symmetric monoidal structure. For associativity:
  \ctikzfig{reshuffle_assoc}
  As for the unit equations:
  \ctikzfig{reshuffle_unit}
  Finally, as for symmetry:
  \ctikzfig{reshuffle_symm}
  Thus $\B$ is a full, faithful, dagger symmetric monoidal functor.
\end{proof}

As a consequence of the previous theorem and
Proposition~\ref{prop:cpstarfhilb}, the category $\CPM[\FHilb]$ is
equivalent to the category of matrix algebras and completely positive
maps. This justifies thinking of the ``completely quantum'' part of
$\CPs[\V]$ as a category of quantum channels.

The category $\CPM[\FHilb]$ is strictly smaller than the category
$\CPs[\FHilb]$ of all finite-dimensional C*-algebras and completely
positive maps. That is, the embedding $\B$ of the previous theorem
does not extend to an equivalence of categories: for example, the
finite-dimensional C*-algebra $A=\mathbb{M}_1 \oplus  \mathbb{M}_2$
cannot be isomorphic to a matrix algebra $\M_n$ because
$\dim(A)=1^2+2^2=5\neq n^2=\dim(\mathbb{M}_n)$.  

In analogy to the case $\V=\FHilb$, it stands to reason to regard
objects $H$ of $\V$ as systems whose state space consists of pure
states, and objects $\B(H)$ of $\CPs[\V]$ as systems whose state space
consists of mixed states. So one might think that the ``pure'' category
$\V$ should embed into the ``mixed'' category $\CPs[\V]$. The following
corollary shows that this is indeed the case. 

\begin{corollary}\label{cor:pure}
  If $\V$ is a dagger compact category, 
  \[
    A \mapsto \B(A)
    \qquad
    f \mapsto f_* \otimes f
  \]
  defines a dagger symmetric monoidal functor $\V \to \CPs[\V]$.
\end{corollary}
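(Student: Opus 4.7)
The plan is to factor the stated functor as
\[ \V \xrightarrow{D} \CPM[\V] \xrightarrow{\B} \CPs[\V], \]
where $\B$ is the dagger symmetric monoidal embedding from Theorem~\ref{thm:cpm} and $D$ is the ``pure state'' or ``doubling'' functor defined by $D(A) = A$ on objects and $D(f) = f_* \otimes f$ on morphisms, regarded as a morphism $A \to B$ in $\CPM[\V]$. Since $\B$ is already known to be dagger symmetric monoidal, the whole task reduces to exhibiting $D$ as one and taking the composition.

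First I would verify that $D$ is well-defined on morphisms: the CPM condition for $f_* \otimes f$ is witnessed trivially by taking the ancilla object $X=I$ and Kraus map $g = f$, so $f_* \otimes f$ has exactly the form required by the CPM--condition displayed in Section~\ref{subsec:quantum}. Functoriality is immediate from the covariant functoriality of the lower-star together with the bifunctoriality of $\otimes$:
\[ (g \circ f)_* \otimes (g \circ f) = (g_* \circ f_*) \otimes (g \circ f) = (g_* \otimes g) \circ (f_* \otimes f). \]
Preservation of the dagger reduces to the dagger compact identity $(f_*)^\dagger = (f^\dagger)_*$ that is built into the definition of lower-star, giving $(f_* \otimes f)^\dagger = (f^\dagger)_* \otimes f^\dagger = D(f^\dagger)$.

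For the symmetric monoidal structure, note that $\CPM[\V]$ has the same objects as $\V$ and $D$ is the identity on them, so the coherence isomorphism $\varphi^D_{A,B} \colon D(A) \otimes D(B) \to D(A \otimes B)$ can be taken to be the identity in $\CPM[\V]$. The only nontrivial check is that $D(f \otimes_\V g) = D(f) \otimes_{\CPM} D(g)$ under this identification, which is a short graphical calculation using the definition of the CPM tensor product recalled in Section~\ref{subsec:quantum}; the remaining coherence (hexagons, triangles, unitarity of $D$ applied to $\sigma$, $\alpha$, $\lambda$, $\rho$) follows by the same kind of diagram chasing.

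Composing $D$ with $\B$ then produces the stated functor $\V \to \CPs[\V]$: the action on objects is $A \mapsto \B(A) = (A^* \otimes A, \pantsalg)$, the action on morphisms is $f \mapsto f_* \otimes f$, and the coherence data are inherited from the reshuffling map $\varphi^\B_{A,B}$ constructed in Theorem~\ref{thm:cpm}. The only mild obstacle is tracking how the (essentially trivial) coherence of $D$ stacks with that of $\B$; no new ideas beyond Theorem~\ref{thm:cpm} are required.
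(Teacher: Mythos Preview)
Your proposal is correct and follows essentially the same route as the paper: the paper's proof is the one-liner ``combine the previous theorem with \cite[Theorem~4.20]{selinger:completelypositive}'', which is exactly your factorisation $\V \xrightarrow{D} \CPM[\V] \xrightarrow{\B} \CPs[\V]$, with the cited Selinger result supplying the dagger symmetric monoidal structure on the doubling functor $D$ that you instead verify by hand.
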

\begin{proof}
  Combine the previous theorem
  with~\cite[Theorem~4.20]{selinger:completelypositive}. 
\end{proof}

There are no meaningful functors in the opposite directions.
A construction $\CPs[\V] \to \V$ would model decoherence, which
cannot be a structure preserving functor. More precisely, the functor
$\V \to \CPs[\V]$ does not have any adjoints, because it does not
preserve (co)limits: $\B(H \oplus K) \not\cong \B(H) \oplus \B(K)$ for
nontrivial Hilbert spaces $H$ and $K$.  
Similarly, a functor $\CPs[\V] \to \CPM[\V]$ would need to coherently
turn an (abstract) C*-algebra into an (abstract) matrix
algebra. Again, it cannot be an adjoint because it cannot preserve
(co)limits.

\section{Nonstandard models}\label{sec:nonstandard}

So far, we have abstracted classical and quantum systems and channels
from the category $\FHilb$ to arbitrary dagger compact categories
$\V$. Now it is high time to see some other examples. This section
considers three: the category of sets and relations, the category of
matrices with positive entries, and the category of relations with
values in a cancellative quantale. We will see that abstract
C*-algebras in these categories turn out to be important well-known
structures, that are nevertheless quite different from concrete
C*-algebras. 

\subsection{Relations}

First, recall the category $\Rel$. Its objects are sets, and morphisms
$A \to B$ are relations $R \subseteq A \times B$. The composition of
$R \colon A \to B$ and $S \colon B \to C$ is given by
\[
  S \circ R = \{ (a,c) \in A \times C \mid \exists b \in B \colon
  (a,b) \in R, (b,c) \in S \},
\]
and $\{(a,a) \mid a \in A\}$ is the identity on $A$. Cartesian product
makes $\Rel$ into a compact category. Finally, it becomes a dagger
compact category by
\[
  R^\dag = \{(b,a) \mid (a,b) \in R\}.
\]

We start by investigating the objects of $\CPs[\Rel]$. This
immediately shows that these nonstandard abstract C*-algebras are
quite different from C*-algebras (in $\FHilb$): they are precisely
groupoids. Recall that a \emph{groupoid} is a category whose
morphisms are all invertible~\cite{maclane:categories}.

\begin{proposition}\label{prop:cpstarrelobjs}
  Normalisable dagger Frobenius algebras in $\Rel$ are (in one-to-one
  correspondence with) groupoids.
\end{proposition}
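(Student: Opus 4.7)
The plan is to use the result of Heunen, Contreras and Cattaneo~\cite{heunencontrerascattaneo:groupoids}, which establishes that \emph{special} dagger Frobenius algebras in $\Rel$ are in bijection with groupoids, and then extend it to the normalisable setting by showing that in $\Rel$ the two notions coincide.

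For the direction from groupoids to normalisable algebras, given a groupoid $G$ with $A=\Mor(G)$, I would define
\[
  m = \{((g,h),\, g\circ h) : \dom(g) = \cod(h)\} \subseteq (A\times A)\times A,
  \qquad
  u = \{(\ast, e) : e\in A\text{ an identity}\}.
\]
Associativity and unitality follow directly from the groupoid axioms, and the Frobenius law follows once one pivots any factorisation $gh = k$ to $g = kh^{-1}$, so invertibility plays the essential role. A short relation-chase then shows the algebra is special, so $z = \id_A$ serves as a central, positive definite normaliser.

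For the converse, let $(A,\whitemult,\whiteunit,z)$ be a normalisable dagger Frobenius algebra in $\Rel$. The first step is a rigidity observation: the only positive definite endomorphism of any set $A$ in $\Rel$ is the identity relation. Indeed, if $z = S^\dagger\circ S$ is also an isomorphism, then $z$ is the graph of some bijection $\sigma : A \to A$. Totality of $z$ forces $\dom(S) = A$, hence $(a,a) \in z$ for every $a \in A$; functionality of $z$ then forces $\sigma(a) = a$ for all $a$, i.e.\ $z = \id_A$. Substituting $z = \id_A$ into the defining equation of the normaliser collapses it to the trace identity $\Tr_A(\whitemult) = \whitecounit$. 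Combining this trace identity with the Frobenius law, associativity, and unitality should yield $\whitemult \circ \whitecomult = \id_A$, so the algebra is special. The Heunen--Contreras--Cattaneo correspondence then recovers a groupoid, and a direct check confirms the two constructions are mutually inverse.

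The hard step is deriving speciality from the collapsed trace identity: diagrammatically, one must show that the bubble $\whitemult \circ \whitecomult$ straightens to a single wire. My plan is to use the Frobenius cup and cap from~\eqref{eq:frob-snake-ids} to reshape $\whitemult \circ \whitecomult$ into a loop of $\whitemult$ attached by a cup to a free wire, cap that loop off using $\Tr_A(\whitemult) = \whitecounit$, and then use unitality to read back the identity; the rigidity lemma about positive definite endomorphisms in $\Rel$, though short, is the conceptual key that unlocks this reduction.
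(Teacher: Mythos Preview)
Your proposal is correct and follows essentially the same route as the paper: both reduce to the Heunen--Contreras--Cattaneo correspondence by arguing that in $\Rel$ the normaliser is forced to be trivial, so that normalisability collapses to speciality. The paper obtains only $z^2=\id[A]$ (positivity gives $z=z^\dag$, and every isomorphism in $\Rel$ is unitary, so $z=z^\dag=z^{-1}$) and then reads off speciality from the alternative form of normalisability in Lemma~\ref{lem:normalisability}; your direct argument that in fact $z=\id[A]$ is a pleasant variant, and what you flag as the ``hard step'' is exactly Lemma~\ref{lem:normalisability} specialised to $z=\id$, so you may simply cite it rather than re-derive it.
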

\begin{proof}
  By~\cite[Theorem~7]{heunencontrerascattaneo:groupoids}, it suffices
  to show that normalisability implies speciality in $\Rel$. Let
  $(A,\whitemult,\whiteunit,\whitenorm)$ be a normalisable dagger
  Frobenius algebra in $\Rel$. 
  Then the normaliser $\whitenorm$ is an isomorphism. 
  In $\Rel$, this means $\whitenorm = \{ (a,z(a)) \mid a \in A\}$
  for a bijection $z \colon A \to A$. But $\whitenorm$ is also
  positive, and hence self-adjoint. Since all isomorphisms in $\Rel$ are unitary, $z$ equals its own 
  inverse. Therefore $\whitenorm \circ \whitenorm = \id[A]$, that is,
  $(A,\whitemult)$ is special.
\end{proof}

Explicitly, the set $A=\Mor(\cat{G})$ of morphisms of a groupoid
$\cat{G}$ becomes an abstract C*-algebra in $\Rel$ under
\begin{align*}
  \whitemult & = \{ ((g,f),g \circ f) \mid f \text{ and } g \text{ are
    composable morphisms in } \cat{G} \}, \\
  \whiteunit & = \{ (*, \id[a]) \mid a \text{ is an object of } \cat{G} \}.
\end{align*}
The proof of the previous proposition illustrates that we may take
$\whitenorm=\id[A]$, but that normalisers of dagger Frobenius algebras
are not unique.  

Next, we determine the morphisms of $\CPs[\Rel]$. 

\begin{definition}
  A relation $R \subseteq \Mor(\cat{G}) \times
  \Mor(\cat{H})$ between groupoids $\cat{G}$ and $\cat{H}$ 
  \emph{repects inverses} when $(g,h) \in R$ implies $(g^{-1},h^{-1})
  \in R$ and $(\id[\dom(g)],\id[\dom(h)]) \in R$.
\end{definition}

\begin{proposition}\label{prop:cpsrel}
  The category $\CPs[\Rel]$ is isomorphic to the category of
  groupoids and relations respecting inverses.     
\end{proposition}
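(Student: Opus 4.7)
The plan is to translate the CP*-condition of equation~\eqref{eq:cpstar} into an explicit combinatorial criterion in $\Rel$, and then to verify it matches the definition of an inverse-respecting relation on both sides. The proof of Proposition~\ref{prop:cpstarrelobjs} shows that the Frobenius cap of a groupoid $\cat{G}$ in $\Rel$ is the relation $\{((g, g^{-1}), \ast) \mid g \in \Mor(\cat{G})\}$, so the induced Frobenius duality implements inversion; consequently, for any relation $k \subseteq \Mor(\cat{G}) \times X \times \Mor(\cat{H})$ (with $X$ an ordinary set carrying the compact duality of $\Rel$), the conjugate $k_*$ is $\{(g^{-1}, x, h^{-1}) \mid (g, x, h) \in k\}$. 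Substituting into the convolution form~\eqref{eq:cpstarconvolution} gives the explicit criterion: $R \subseteq \Mor(\cat{G}) \times \Mor(\cat{H})$ is CP* iff there exist a set $X$ and such a $k$ with $(g, h) \in R$ precisely when $g = g_1 g_2$ and $h = h_1 h_2$ for some composable factorisations and $(g_1, x, h_1), (g_2^{-1}, x, h_2^{-1}) \in k$ for some $x \in X$.

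For the forward direction, both clauses of ``respects inverses'' follow from this criterion. Inverse-closure is immediate from its manifest symmetry: if $(g, h) \in R$ is witnessed by the above triples, then $g^{-1} = g_2^{-1} g_1^{-1}$ and $h^{-1} = h_2^{-1} h_1^{-1}$ re-use the same two triples with their roles swapped, so $(g^{-1}, h^{-1}) \in R$. Identity-closure follows by instead choosing the decompositions $\id[\dom g] = g_2^{-1} \circ g_2$ and $\id[\dom h] = h_2^{-1} \circ h_2$: the single triple $(g_2^{-1}, x, h_2^{-1}) \in k$ then serves as \emph{both} required witness triples simultaneously, forcing $(\id[\dom g], \id[\dom h]) \in R$.

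For the backward direction, given $R$ respecting inverses, I would take $X = \Mor(\cat{G}) \times \Mor(\cat{H})$ and
\[
  k \;=\; \bigl\{\,\bigl(g,(g,h),h\bigr),\; \bigl(\id[\dom g],(g,h),\id[\dom h]\bigr) \,\bigm|\, (g,h)\in R\,\bigr\}.
\]
A short case analysis on the kinds of the two witness triples shows that the unpacked condition produces exactly $(g_0, h_0)$, $(g_0^{-1}, h_0^{-1})$, $(\id[\dom g_0], \id[\dom h_0])$ and $(\id[\cod g_0], \id[\cod h_0])$ for each $(g_0, h_0) \in R$; all of these lie in $R$ by the two clauses of respects-inverses, and every $(g_0, h_0) \in R$ itself arises from the case where one triple is of the first form and the other of the second. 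The main obstacle is picking a Kraus relation at the right level of fineness: the naive choice $X = \{\ast\}$ with $k = \{(g, \ast, h) \mid (g, h) \in R\}$ would generate spurious ``twisted products'' $(g_1 g_2, h_1 h_2)$ from unrelated pairs $(g_1, h_1), (g_2^{-1}, h_2^{-1}) \in R$, and labelling each triple by its originating pair in $R$ is precisely what blocks this cross-pollination. Once both directions are in place, identity and composition clearly match between the two descriptions, giving an isomorphism of categories.
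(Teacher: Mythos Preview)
Your argument is correct, and it takes a genuinely different route from the paper's own proof. The paper does not work with the convolution form~\eqref{eq:cpstarconvolution} at all. Instead it observes that, for \emph{any} sets $A,B$, a relation $S \subseteq (A\times A)\times(B\times B)$ has the CPM shape (the right-hand side of~\eqref{eq:cpstar}) if and only if
\[
  ((a,a'),(b,b'))\in S \;\Longrightarrow\; ((a',a),(b',b))\in S \ \text{and}\ ((a,a),(b,b))\in S,
\]
then computes the composite $\grayaction\circ R\circ\whitecoaction$ explicitly as $\{((g,g'),(h,h'))\mid (g^{-1}g',h^{-1}h')\in R\}$ and checks that the displayed implication for this composite is exactly the inverse-respecting condition. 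So the paper factors the problem through the known CPM characterisation in $\Rel$, whereas you bypass that and work directly with Kraus data.

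What each approach buys: the paper's argument is shorter and more conceptual once one accepts the CPM-in-$\Rel$ folklore, but it leaves both that folklore and the final ``translates precisely'' step to the reader. Your argument is fully self-contained, and the explicit Kraus relation you build for the converse direction---indexing $X$ by pairs $(g,h)\in R$ so that distinct witnesses cannot cross-pollinate---is a nice constructive ingredient that the paper's proof does not supply. Two small points of presentation: the computation of the Frobenius cap as $\{((g,g^{-1}),\ast)\}$ follows directly from the groupoid Frobenius structure spelled out after Proposition~\ref{prop:cpstarrelobjs} rather than from its proof; and your case analysis for the reconstructed relation also produces $(\id[\cod g_0],\id[\cod h_0])$, which you should note lies in $R$ by first applying inverse-closure and then identity-closure.
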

\begin{proof}
  Unfolding definitions shows that a morphism $R \subseteq (A \times
  A) \times (B \times B)$ in $\Rel$ is completely positive, \ie is of
  the form of the right-hand side of equation~\eqref{eq:cpstar},
  precisely when 
  \begin{equation}
    ((a,a'),(b,b') \in R \implies ((a',a),(b',b) \in R,\,\, ((a,a),(b,b)) \in R.
    \tag{$*$}
  \end{equation}
  If $\cat{G}$ and $\cat{H}$ are groupoids, corresponding to Frobenius
  algebras $(G,\whitemult)$ and $(H,\graymult)$, and $R \subseteq G
  \times H$, then 
  \begin{align*}
    \whitecoaction 
    & = \{ ((g,g'),g^{-1}\circ g') \in G^3 \mid
    g^{-1} \text{ and } g' \text{ are composable} \}, \\
    \grayaction \circ R \circ \whitecoaction 
    & = \{ ((g,g'),(h,h')) \in G^2 \times H^2 \mid 
    g^{-1} \text{ and } g' \text{ are composable} \}, \\
    & \phantom{= \{ ((g,g'),(h,h')) \in G^2 \times H^2 \mid \;\,}
    h^{-1} \text{ and } h' \text{ are composable} \}, \\
    & \phantom{= \{ ((g,g'),(h,h')) \in G^2 \times H^2 \mid \;\,}
    (g^{-1} \circ g', h^{-1} \circ h') \in R \}.
  \end{align*}
  Substituting this into~($*$) translates precisely into $R$
  respecting inverses. 
\end{proof}

Next we investigate ``completely quantum'' objects in
$\CPs[\Rel]$. Recall that a category is \emph{indiscrete} when there
is precisely one morphism between each two objects. Indiscrete
categories are automatically groupoids.  

\begin{proposition}\label{prop:indiscrete}
  The objects in $\CPs[\Rel]$ that are isomorphic to $\B(A)$ for some
  set $A$ are (in one-to-one correspondence with) indiscrete groupoids.
\end{proposition}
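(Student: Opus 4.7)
The plan is to identify $\B(A)$ in $\Rel$ with the Frobenius algebra arising from the \emph{pair groupoid} on $A$, and then observe that every indiscrete groupoid is isomorphic to some pair groupoid.

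First I would unfold the construction of Proposition~\ref{prop:abstractmatrixalgebras} in $\Rel$. Since $\Rel$ is self-dual and tensor is Cartesian product, $A^* \otimes A$ has carrier $A \times A$, and $\varepsilon_A$ is the relation $\{((a,a),*) \mid a \in A\}$. A direct computation of the pants multiplication and unit then yields
\[
\whitemult = \{\,(((a,b),(b,c)),(a,c)) \mid a,b,c \in A\,\},
\qquad
\whiteunit = \{\,(*,(a,a)) \mid a \in A\,\}.
\]
Under the correspondence of Proposition~\ref{prop:cpstarrelobjs}, this is precisely the Frobenius algebra of the \emph{pair groupoid} $\cat{P}(A)$: its objects are elements of $A$, each pair $(a,b)$ is the unique morphism $a \to b$, and composition is $(b,c) \circ (a,b) = (a,c)$. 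Since the pair groupoid is manifestly indiscrete, every $\B(A)$ corresponds to an indiscrete groupoid.

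Conversely, given an indiscrete groupoid $\cat{G}$ with object set $X$, I would consider $\psi \colon \Mor(\cat{G}) \to X \times X$, $f \mapsto (\dom(f),\cod(f))$. Indiscreteness is exactly what makes $\psi$ a bijection, and it automatically transports the composition of $\cat{G}$ to that of $\cat{P}(X)$, sending $\id[a]$ to $(a,a)$ and $f^{-1}$ to $(\cod(f),\dom(f))$. Hence the graph of $\psi$ is a $*$-homomorphism of the associated Frobenius algebras in $\Rel$ in the sense of Definition~\ref{def:starhomomorphism}; by Lemma~\ref{lem:starhomomorphism} both $\psi$ and $\psi^{-1}$ are morphisms of $\CPs[\Rel]$. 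Thus the Frobenius algebra of $\cat{G}$ is isomorphic in $\CPs[\Rel]$ to $\B(X)$.

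The only delicate point, which I expect to be the main obstacle, is verifying that this assignment descends to a genuine bijection on the nose, so that two non-isomorphic indiscrete groupoids cannot collapse to the same $\CPs[\Rel]$-isomorphism class, and every class of the form $[\B(A)]$ meets an indiscrete groupoid exactly once. I would argue this by noting that every isomorphism in $\Rel$ is the graph of a bijection, that any $\CPs[\Rel]$-isomorphism together with its inverse is forced to preserve multiplication and unit, and that under Proposition~\ref{prop:cpstarrelobjs} this is exactly a groupoid isomorphism. Combining with the two directions above yields the claimed one-to-one correspondence between indiscrete groupoids and $\CPs[\Rel]$-objects isomorphic to some $\B(A)$.
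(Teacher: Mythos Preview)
Your first two paragraphs match the paper's argument: the paper simply computes that the groupoid associated to $\B(A)$ via Proposition~\ref{prop:cpstarrelobjs} has morphism set $A \times A$ with composition $(b_2,b_1)\circ(a_2,a_1) = (b_2,a_1)$ when $b_1=a_2$, reads off that identities are the diagonal pairs and that $(a_2,a_1)$ is the unique arrow $a_1 \to a_2$, and stops there. Your explicit converse (every indiscrete groupoid is isomorphic in $\CPs[\Rel]$ to some $\B(X)$) is more than the paper actually writes, but it is exactly the routine step the paper leaves implicit.

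Your last paragraph, however, contains a real error. It is \emph{not} true that a $\CPs[\Rel]$-isomorphism must preserve multiplication. By Proposition~\ref{prop:cpsrel} such an isomorphism is the graph of a bijection $\psi$ satisfying $\psi(g^{-1})=\psi(g)^{-1}$ and $\psi(\id[\dom g])=\id[\dom\psi(g)]$, and nothing forces $\psi(h\circ g)=\psi(h)\circ\psi(g)$. For example, on the one-object groupoid $\mathbb{Z}/5$ the involution fixing $0$ and swapping $1\leftrightarrow 2$, $3\leftrightarrow 4$ respects inverses and the identity, hence is a $\CPs[\Rel]$-automorphism, yet $\psi(1+1)=\psi(2)=1 \neq 4 = \psi(1)+\psi(1)$. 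So $\CPs[\Rel]$-isomorphism is strictly coarser than groupoid isomorphism, and your proposed argument for the ``delicate point'' does not go through.

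Fortunately this does not harm the proposition, and the repair is short. The identity clause forces any inverse-respecting bijection $\psi$ to induce a bijection $\varphi$ on objects with $\dom(\psi(g))=\varphi(\dom g)$, and the inverse clause then gives $\cod(\psi(g))=\varphi(\cod g)$. Hence $\psi$ restricts to bijections between corresponding hom-sets. Since every hom-set of the pair groupoid $\B(A)$ is a singleton, any groupoid $\CPs[\Rel]$-isomorphic to it also has singleton hom-sets and is therefore indiscrete. That cardinality argument is what you should use in place of the appeal to multiplication-preservation.
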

\begin{proof}
  By definition, $\B(A)$ corresponds to a groupoid whose set of
  morphisms is $A \times A$, and whose composition is given by
  \[
    (b_2,b_1) \circ (a_2,a_1) = \left\{ \begin{array}{ll} (b_2,a_1) &
        \text{ if } b_1=a_2, \\ \text{undefined} & \text{
          otherwise}. \end{array} \right. 
  \]
  We deduce that the identity morphisms of $\B(A)$ are the pairs
  $(a_2,a_1)$ with $a_2=a_1$. So objects of $\B(A)$ just correspond to
  elements of $A$. Similarly, we find that the morphism $(a_2,a_1)$
  has domain $a_1$ and codomain $a_2$. Hence $(a_2,a_1)$ is the unique
  morphism $a_1 \to a_2$ in $\B(A)$.
\end{proof}

In other words, the essential image of the embedding $\B \colon
\CPM[\Rel] \to \CPs[\Rel]$ is the full subcategory of $\CPs[\Rel]$
consisting of indiscrete groupoids. 

There are many more connections between the theory of groupoids and
abstract C*-algebras. For example, projections in an abstract
C*-algebra in $\Rel$ are precisely the connected components of its
corresponding
groupoid~\cite[Lemma~22]{coeckeheunenkissinger:compositional}.

\subsection{Positive matrices}

To conclude this section, we consider categories that are in some sense
between the categories $\FHilb$ and $\Rel$; the former can be thought
of as involving matrices over the complex numbers, whereas the latter
can be thought of as involving matrices over the two element set. We
will consider matrices ranging over other domains.

We start with the category $\MatR$. Its
objects are natural numbers, and a morphism $m \to n$ is an $m$-by-$n$
matrix whose entries are nonnegative real numbers, \ie elements of
$[0,\infty)$. Composition is matrix multiplication, and identity
matrices give identity morphisms. Tensor product acts as
multiplication on objects, and as Kronecker product on morphisms.

We will determine the objects of $\CPs[\MatR]$ by reducing to
$\CPs[\FHilb]$. There is an obvious dagger symmetric monoidal
functor $\MatR \to \FHilb$, sending $n$ to $\C^n$ with its canonical
basis. Hence a normalisable dagger Frobenius algebra
$(n,\whitemult,\whiteunit,\whitenorm)$ in $\MatR$ also defines a
C*-algebra structure on $\C^n$. 
Recall that any finite-dimensional C*-algebra $A$ can be written in
standard form as $A\cong\bigoplus_{k} \M_{n_k}$. 

\begin{definition}
  Write $E_n=\{e_{ij} \mid i,j=1,\ldots,n\}$ for the standard basis of $\M_n$.
  By the \emph{matrix} of a linear map $f \colon \M_m \to \M_n$, we mean the function $F \colon E_m \times E_n \to \mathbb{C}$ given by the entries $F(e_{ij},e_{kl}) = \langle e_{kl} \mid f\big(e_{ij}\big) \rangle = \Tr(e_{lk} f(e_{ij}))$.
  This definition extends to linear maps $f \colon \bigoplus_k \M_{m_k} \to \bigoplus_l \M_{n_l}$ between finite-dimensional C*-algebras in standard form. 
  We say that $f$ is \emph{really positive} when its matrix $F$ has entries in $\Rpos$.
  If $f$ is completely positive and really positive, we call it \emph{really completely positive}.
\end{definition}

\begin{proposition}\label{prop:matr}
  The category $\CPs[\MatR]$ is isomorphic to the category of
  finite-dimensional C*-algebras in standard form and really completely positive maps.
\end{proposition}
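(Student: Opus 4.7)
The plan is to use the canonical faithful dagger symmetric monoidal embedding $\iota \colon \MatR \hookrightarrow \FHilb$, which sends $n$ to $\mathbb{C}^n$ with its standard basis and each matrix with entries in $\Rpos$ to its associated linear map. By Theorem~\ref{thm:cpstar} this lifts to a faithful functor $\bar{\iota} \colon \CPs[\MatR] \to \CPs[\FHilb]$, and composing with the equivalence of Proposition~\ref{prop:cpstarfhilb} yields a faithful functor $E$ from $\CPs[\MatR]$ into finite-dimensional C*-algebras and completely positive maps. I will show that $E$ is an isomorphism onto the full subcategory of C*-algebras in standard form and really completely positive maps.

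At the object level, the standard-form algebra $\bigoplus_k \M_{n_k}$, written in the canonical basis $\{e_{ij}^{(k)}\}$, has multiplication $e_{ij}^{(k)} \otimes e_{i'j'}^{(k')} \mapsto \delta_{kk'}\delta_{ji'} e_{ij'}^{(k)}$, unit $\sum_{k,i} e_{ii}^{(k)}$, and the normaliser $e_{ij}^{(k)} \mapsto n_k^{-1/2} e_{ij}^{(k)}$ from Theorem~\ref{thm:algebrasFHilb}; all three have entries in $\Rpos$ and hence define a normalisable dagger Frobenius algebra in $\MatR$ whose image under $E$ is exactly this C*-algebra. Conversely, any normalisable dagger Frobenius algebra in $\MatR$ embeds via $\iota$ into $\FHilb$ and there becomes, by Theorem~\ref{thm:algebrasFHilb}, a finite-dimensional C*-algebra, which one records in its standard form. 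This gives a bijection on objects.

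For morphisms, fix $(A,\whitemult)$ and $(B,\graymult)$ in $\CPs[\MatR]$ and consider $f \colon A \to B$ in $\MatR$, i.e.\ a linear map whose matrix has entries in $\Rpos$; equivalently, $f$ is really positive. If $f$ satisfies the CP*--condition in $\MatR$, then $\iota(f)$ satisfies it in $\FHilb$, and Proposition~\ref{prop:completepositivity} gives that $f$ is completely positive, so $f$ is really completely positive. For the converse, given a really completely positive $f$, I would exhibit a Kraus map $g \colon A \to X \otimes B$ with entries in $\Rpos$ witnessing the CP*--condition~\eqref{eq:cpstar} for $f$ in $\MatR$.

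I expect the converse just described to be the main obstacle: the usual Stinespring--Kraus decomposition over $\mathbb{C}$ need not produce operators whose matrix entries are nonnegative, so witnessing the CP*--condition inside $\MatR$ rather than $\FHilb$ requires a nonnegative analogue of that construction. The right way to proceed is to exploit the block-diagonal structure $A = \bigoplus_k \M_{n_k}$ and $B = \bigoplus_l \M_{m_l}$ and build the required Kraus map blockwise out of the nonnegative data of $f$. Once that is in place, all remaining categorical structure (composition, identities, tensor, dagger, symmetry, and compactness) is inherited from $\iota$, and the object and morphism bijections assemble into the claimed isomorphism of categories.
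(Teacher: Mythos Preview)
Your route is the paper's: embed $\MatR$ in $\FHilb$, lift to $\CPs$, and identify the image. Both you and the paper handle objects by observing that the multiplication, unit, and normaliser of $\bigoplus_k \M_{n_k}$ in the standard basis have entries in $\Rpos$, and both note that the forward direction on morphisms is immediate. Where the paper writes ``full by construction'' and moves on, you correctly isolate fullness as the substantive step.

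That step is a genuine gap, and the fix you sketch does not work. Take $A=I$ and $B=\M_n$. A really completely positive map $I\to\M_n$ is, by definition, a positive semidefinite $n\times n$ matrix with nonnegative entries: a \emph{doubly nonnegative} matrix. A morphism $I\to\M_n$ in $\CPs[\MatR]$ is, via $\M_n\cong\B(n)$ and Theorem~\ref{thm:cpm}, a state in $\CPM[\MatR]$, hence of the form $\sum_k v_k v_k^{T}$ with each $v_k\in\Rpos^{\,n}$: a \emph{completely positive} matrix in the matrix-theoretic sense. For $n\le 4$ these two cones coincide, but for $n\ge 5$ the inclusion of completely positive matrices in doubly nonnegative ones is strict. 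So there are really completely positive states admitting no Kraus witness in $\MatR$, and ``build the Kraus map blockwise out of the nonnegative data of $f$'' cannot succeed in general; the paper supplies no alternative argument for this direction either.
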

\begin{proof}
  The fact that the functor $\MatR \to \FHilb$ is dagger symmetric
  monoidal and faithful implies that the induced functor
  $\CPs[\MatR] \to \CPs[\FHilb]$ is also dagger symmetric monoidal and faithful.
  It is full by construction, and injective on objects.
  Hence it suffices to show that it is surjective on objects.
  First, observe that the structure maps
  $\whitemult$, $\whiteunit$, and $\whitenorm$ of the C*-algebra $\M_n$ are really completely positive.
  The matrices $M \colon E_n^3 \to \Rpos$ for multiplication, $U \colon E_n \to \Rpos$ for the unit, and $Z
  \colon E_n^2 \to \Rpos$ then take the form
  \begin{align*}
    M(e_{ij},e_{kl},e_{pq}) & = \delta_{jk} \delta_{ip} \delta _{lq},  \\
    U(e_{ij}) & = \delta_{ii}, \\
    N(e_{ij}) & = 1/\sqrt{n}.
  \end{align*}
  Hence $\M_n$ is in the image of the functor $\CPs[\MatR] \to
  \CPs[\FHilb]$. Because $\CPs[\MatR]$ has biproducts, C*-algebras in
  standard form are reached, too.
  %
\end{proof}

Finally, let us consider matrices with entries ranging over other
sets of positive numbers, such as the unit interval $[0,1]$.
To be precise, we will consider the category $\Mat(Q)$, where $Q$ is a
cancellative commutative quantale. Recall that a \emph{quantale} is a
partial order $(Q,\leq)$ that has suprema of arbitrary subsets,
together with a commutative multiplication $(Q,\cdot,1)$ satisfying
\[
  x \cdot (\bigvee_i y_i) = \bigvee_i x \cdot y_i.
\]
It is cancellative when $x \cdot y = x \cdot z$ implies $y=z$ or
$x=0$, where $0 = \bigvee
\emptyset$. For more information we refer to~\cite{rosenthal:quantales}.
The extended nonnegative real numbers $[0,\infty]$ form an example
under the usual ordering and multiplication, as does the unit interval
$[0,1]$. Another example is the Boolean algebra $\{0,1\}$ under the
usual ordering and multiplication. 

The category $\Mat(Q)$ has sets as objects, morphisms $A \to B$
are $Q$-valued matrices, \ie functions $A \times B \to Q$. Composition
of $R \colon A \to B$ and $S \colon B \to C$ is 
\[
  S \circ R(a,c) = \bigvee_b R(a,b) \cdot S(b,c).
\]
Cartesian product and matrix transpose makes this into a dagger
compact category, very much like $\Rel$. In fact, notice that
$\Mat(\{0,1\}) = \Rel$.\footnote{Notice also that $\mathbb{R}_{\geq 0}$ is not a quantale under its usual ordering.} 

\begin{lemma}
  Any normalisable dagger Frobenius algebra in $\Mat(Q)$ induces a
  groupoid. 
\end{lemma}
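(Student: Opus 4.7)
My plan is to use a ``support'' functor $s \colon \Mat(Q) \to \Rel$ and transport the algebraic structure to the setting of $\Rel$, where Proposition~\ref{prop:cpstarrelobjs} already classifies normalisable dagger Frobenius algebras as groupoids. Define $s$ to be the identity on objects, and on a morphism $M \colon A \to B$ (a function $A \times B \to Q$) to be its support
\[
  s(M) = \{(a,b) \in A \times B \mid M(a,b) \neq 0\}.
\]
The crucial observation is that cancellativity of the quantale $Q$ forces $x \cdot y = 0$ to be equivalent to $x = 0$ or $y = 0$: if $y \neq 0$ then $x \cdot y = 0 = x \cdot 0$ would imply $x = 0$ by cancellation. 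Combined with the fact that $\bigvee_i y_i = 0$ iff every $y_i = 0$, this yields $s(S \circ R) = s(S) \circ s(R)$. Preservation of identities, daggers, and tensor products is immediate, so $s$ is a dagger symmetric monoidal functor.

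Next, I would observe that any dagger symmetric monoidal functor automatically preserves dagger Frobenius algebras, since associativity, unitality, and the Frobenius law are expressed using $\otimes$, $\dagger$, and composition alone. For normalisability, the normaliser $z \colon A \to A$ in $\Mat(Q)$ is positive definite, hence both positive (so $s(z) = s(h)^\dagger \circ s(h)$ is positive in $\Rel$) and an isomorphism (so $s(z)$ is an isomorphism in $\Rel$, hence positive definite there). Centrality of $z$ and the normalisation equation $\Tr_A(\whitemult) \circ z^2 = \whitecounit$ are preserved by functoriality. Thus $s$ takes a normalisable dagger Frobenius algebra in $\Mat(Q)$ to one in $\Rel$.

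Applying Proposition~\ref{prop:cpstarrelobjs} to the image then yields a groupoid structure on the underlying set, which is the groupoid induced by the original algebra. The main obstacle in the argument is the verification that $s$ preserves composition; this is precisely where cancellativity is essential, for without it one could have $R(a,b) \neq 0$ and $S(b,c) \neq 0$ while their product vanishes, destroying the correspondence between the support of a composite and the composite of supports. Everything else is routine transport of algebraic structure along a dagger symmetric monoidal functor.
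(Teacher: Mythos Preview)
Your proposal is correct and follows essentially the same route as the paper: the support map $s$ is precisely the functor $f^*$ induced by the quantale homomorphism $f\colon Q\to\{0,1\}$ that the paper invokes, and both proofs then transport the dagger Frobenius structure to $\Rel$ and appeal to Proposition~\ref{prop:cpstarrelobjs}. The only cosmetic difference is in handling the normaliser: you argue directly that $s(z)$ is positive definite and central, whereas the paper observes that any positive isomorphism in $\Rel$ squares to the identity and therefore replaces $s(z)$ by $\id$ as the normaliser; both variants work.
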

\begin{proof}
  Let $(A,\whitemult,\whiteunit,\whitenorm)$ be a normalisable dagger
  Frobenius algebra in $\Mat(Q)$.
  Notice that there is a (unique) homomorphism $f \colon Q \to
  \{0,1\}$ of quantales such that $f(x)=0$ if and only if 
  $x=0$, namely 
  \[
    f(x) = \left\{ \begin{array}{ll} 0 & \text{ if } x = 0, \\ 1 &
        \text{ otherwise.} \end{array}\right.
  \]
  It induces a dagger symmetric monoidal functor $f^* \colon \Mat(Q) \to
  \Rel$; see also~\cite[Section~5.2]{abramskyheunen:hstar}.
  Therefore $G:=A$ becomes a dagger
  Frobenius algebra in $\Rel$ with multiplication $f^*(\whitemult)$
  and unit $f^*(\whiteunit)$. Moreover, $f^*(\whitecounit) = f^*(\whiteloop)
  \circ f^*(\whitenorm)^2$ by normalisability. But, as in the proof of 
  Proposition~\ref{prop:cpstarrelobjs}, $f^*(\whitenorm)$ is a
  positive isomorphism in $\Rel$, and so $f^*(\whitenorm)^2=\id[G]$.
  At this point Lemma~\ref{lem:normalisability} guarantees that
  $(G,f^*(\whitenorm),f^*(\whiteunit),\id[G])$
  is a normalisable dagger Frobenius algebra in $\Rel$, which
  corresponds to a groupoid by Proposition~\ref{prop:cpstarrelobjs}.
\end{proof}

The previous lemma shows that if we ``collapse'' the matrix $M \colon G^3 \to Q$ of
multiplication to $M \colon G^3 \to \{0,1\}$, it becomes the
multiplication table of a groupoid. Similarly, the matrix $U \colon G
\to Q$ becomes the set of identities of that groupoid. The only freedom
left is what nonzero elements of $Q$ to place in the nonzero entries of
these matrices. It is easy to obtain several constraints on these values~\cite[Section~5.2]{abramskyheunen:hstar}.
However, in general, $\CPs[\Mat(Q)]$ does not seem to correspond to a
familiar category such as $\CPs[\Rel]$. We refrain from
explicating it further, but note that it does provide a nonstandard
model that lends itself to easy calculation, for example to find
counterexamples.

\bibliographystyle{spmpsci}      
\bibliography{channels}

\begin{thebibliography}{10}
\providecommand{\url}[1]{{#1}}
\providecommand{\urlprefix}{URL }
\expandafter\ifx\csname urlstyle\endcsname\relax
  \providecommand{\doi}[1]{DOI~\discretionary{}{}{}#1}\else
  \providecommand{\doi}{DOI~\discretionary{}{}{}\begingroup
  \urlstyle{rm}\Url}\fi

\bibitem{abramskycoecke:cqm}
Abramsky, S., Coecke, B.: Categorical quantum mechanics, pp. 261--324.
\newblock Elsevier (2008)

\bibitem{abramskyheunen:hstar}
Abramsky, S., Heunen, C.: H*-algebras and nonunital frobenius algebras: first
  steps in infinite-dimensional categorical quantum mechanics.
\newblock Clifford Lectures, AMS Proceedings of Symposia in Applied Mathematics
  \textbf{71}, 1--24 (2012)

\bibitem{alicki:cp}
Alicki, R.: Comment on `reduced dynamics need not be completely positive'.
\newblock Physical Review Letters \textbf{75}, 3020 (1995)

\bibitem{bhatia:positivematrices}
Bhatia, R.: Positive definite matrices.
\newblock Princeton University Press (2007)

\bibitem{boixoheunen:sequential}
Boixo, S., Heunen, C.: Entangled and sequential quantum protocols with
  dephasing.
\newblock Physical Review Letters \textbf{108}, 120,402 (2012)

\bibitem{choi}
Choi, M.D.: Completely positive linear maps on complex matrices.
\newblock Linear Algebra and Its Applications \textbf{10}(3), 285--290 (1975)

\bibitem{coecke:selinger}
Coecke, B.: Axiomatic description of mixed states from {S}elinger's
  {CPM}--construction.
\newblock Electronic Notes in Theoretical Computer Science \textbf{210}, 3--13
  (2008)

\bibitem{coecke:structures}
Coecke, B. (ed.): New Structures for Physics.
\newblock No. 813 in Lecture Notes in Physics. Springer (2009)

\bibitem{coeckeduncan:complementary}
Coecke, B., Duncan, R.: Interacting quantum observables: categorical algebra
  and diagrammatics.
\newblock New Journal of Physics \textbf{13}, 043,016 (2011)

\bibitem{coeckeheunen:cp}
Coecke, B., Heunen, C.: Pictures of complete positivity in arbitrary dimension.
\newblock In: Electronic Proceedings in Theoretical Computer Science, vol.~95,
  pp. 27--35 (2012)

\bibitem{coeckeheunenkissinger:compositional}
Coecke, B., Heunen, C., Kissinger, A.: Compositional quantum logic.
\newblock In: B.~Coecke, L.~Ong, P.~Panangaden (eds.) Computation, Logic,
  Games, and Quantum Foundations, no. 7860 in Lectures Notes in Computer
  Science, pp. 21--36. Springer (2013)

\bibitem{coeckepaquettepavlovic:structuralism}
Coecke, B., Paquette, {\'E}.O., Pavlovi{\'c}, D.: Classical and quantum
  structuralism.
\newblock In: S.~Gay, I.~Mackey (eds.) Semantic Techniques in Quantum
  Computation, pp. 29--69. Cambridge University Press (2010)

\bibitem{coeckepavlovic:classicalobjects}
Coecke, B., Pavlovi{\'c}, D.: Quantum measurements without sums.
\newblock In: Mathematics of Quantum Computing and Technology. Taylor and
  Francis (2007)

\bibitem{coeckepavlovicvicary:bases}
Coecke, B., Pavlovi{\'c}, D., Vicary, J.: A new description of orthogonal
  bases.
\newblock Mathematical Structures in Computer Science \textbf{23}(3), 555--567
  (2012)

\bibitem{coeckeperdrix:channels}
Coecke, B., Perdrix, S.: Environment and classical channels in categorical
  quantum mechanics.
\newblock In: Computer Science Logic, pp. 230--244. Springer (2010)

\bibitem{davidson:cstar}
Davidson, K.R.: C*-algebras by example.
\newblock American Mathematical Society (1991)

\bibitem{duncan:thesis}
Duncan, R.: Types for quantum computing.
\newblock Ph.D. thesis, Oxford University (2006)

\bibitem{heunencontrerascattaneo:groupoids}
Heunen, C., Contreras, I., Cattaneo, A.S.: Relative frobenius algebras are
  groupoids.
\newblock Journal of Pure and Applied Algebra \textbf{217}, 114--124 (2013)

\bibitem{heunenkissingerselinger:cpproj}
Heunen, C., Kissinger, A., Selinger, P.: Completely positive projections and
  biproducts.
\newblock arxiv:1308.4557, to appear in the proceedings of Quantum Physics and
  Logic X  (2013)

\bibitem{heunenvicary:cqm}
Heunen, C., Vicary, J.: Introduction to Categorical Quantum Mechanics.
\newblock Oxford University Press (to appear)

\bibitem{joyalstreet:braidedtensorcategories}
Joyal, A., Street, R.: Braided tensor categories.
\newblock Advances in Mathematics \textbf{102}, 20--78 (1993)

\bibitem{keyl:quantuminformation}
Keyl, M.: Fundamentals of quantum information theory.
\newblock Physical Reports \textbf{369}, 431--548 (2002)

\bibitem{keylwerner:lectures}
Keyl, M., Werner, R.F.: Channels and maps.
\newblock In: D.~Bru{\ss}, G.~Leuchs (eds.) Lectures on Quantum Information,
  pp. 73--86. Wiley (2007)

\bibitem{li:realoperatoralgebras}
Li, B.: Real operator algebras.
\newblock World Scientific (2003)

\bibitem{maclane:categories}
{Mac Lane}, S.: Categories for the Working Mathematician, 2nd edn.
\newblock Springer (1971)

\bibitem{panangadenpaquette:anyons}
Panangaden, P., Paquette, {\'E}.O.: New structures for physics.
\newblock In: Coecke  \cite{coecke:structures}, pp. 939--979

\bibitem{paulsen:completelypositive}
Paulsen, V.: Completely bounded maps and operators algebras.
\newblock Cambridge University Press (2002)

\bibitem{pechukas:cp}
Pechukas, P.: Reduced dynamics need not be completely positive.
\newblock Physical Review Letters \textbf{74}, 1060--1062 (1994)

\bibitem{redei:instead}
Redei, M.: Why {J}ohn von {N}eumann did not like the {H}ilbert space formalism
  of quantum mechanics (and what he liked instead).
\newblock Studies in the History and Philosophy of Modern Physics \textbf{27},
  493--510 (1996)

\bibitem{rosenthal:quantales}
Rosenthal, K.I.: Quantales and their applicatoins.
\newblock Pitman Research Notes in Mathematics. Longman Scientific \& Technical
  (1990)

\bibitem{ruan:realoperatorspaces}
Ruan, Z.J.: On real operator spaces.
\newblock Acta Mathematica Sinica \textbf{19}(3), 485--496 (2003)

\bibitem{selinger:completelypositive}
Selinger, P.: Dagger compact closed categories and completely positive maps.
\newblock In: Quantum Programming Languages, \emph{Electronic Notices in
  Theoretical Computer Science}, vol. 170, pp. 139--163. Elsevier (2007)

\bibitem{selinger:daggeridempotents}
Selinger, P.: Idempotents in dagger categories.
\newblock In: Quantum Programming Languages, \emph{Electronic Notes in
  Theoretical Computer Science}, vol. 210, pp. 107--122. Elsevier (2008)

\bibitem{selinger:graphicallanguages}
Selinger, P.: A survey of graphical languages for monoidal categories.
\newblock In: Coecke  \cite{coecke:structures}, pp. 289--356

\bibitem{shajisudarshan:afraid}
Shaji, A., Sudarshan, E.C.G.: Who's afraid of not completely positive maps?
\newblock Physics Letter A \textbf{341}(1--4), 48--54 (2005)

\bibitem{stinespring}
Stinespring, W.F.: Positive functions on {C}*-algebras.
\newblock Proceedings of the American Mathematical Society \textbf{6}(2),
  211--216 (1955)

\bibitem{stormer:positive}
St{\o}rmer, E.: Positive linear maps of operator algebras.
\newblock Springer (2013)

\bibitem{vicary:quantumalgebras}
Vicary, J.: Categorical formulation of finite-dimensional quantum algebras.
\newblock Communications in Mathematical Physics \textbf{304}(3), 765--796
  (2011)

\bibitem{zakrzewski:pseudogroups}
Zakrzewski, S.: Quantum and classical pseudogroups {I}.
\newblock Communications in Mathematical Physics \textbf{134}, 347--370 (1990)

\bibitem{zyczkowskibengtsson:cp}
{\.{Z}}yczkowski, K., Bengtsson, I.: On duality between quantum states and
  quantum maps.
\newblock Open Systems \& Information Dynamics \textbf{11}, 3--42 (2004)

\end{thebibliography}

\end{document}